\newtheorem{lemma}{Lemma}
\newtheorem{definition}{Definition}
\newtheorem{theorem}{Theorem}
\begin{document}
\title{Correctness of the Chord Protocol}

\author{Bojan Marinković}
\address{Mathematical Institute of the Serbian Academy of Sciences and Arts, Belgrade, Serbia}
\email{\url{bojanm@mi.sanu.ac.rs}}

\author{Zoran Ognjanović}
\address{Mathematical Institute of the Serbian Academy of Sciences and Arts, Belgrade, Serbia}
\email{\url{zorano@mi.sanu.ac.rs}}

\author{Paola Glavan}
\address{Faculty of Mechanical Engineering and Naval Architecture, University of Zagreb, Zagreb, Croatia}
\email{\url{pglavan@fsb.hr}}

\author{Anton Kos}
\address{Faculty of Electrical Engineering, University of Ljubljana, Ljubljana, Slovenia}
\email{\url{anton.kos@fe.uni-lj.si}}

\author{Anton Umek}
\address{Faculty of Electrical Engineering, University of Ljubljana, Ljubljana, Slovenia}
\email{\url{anton.umek@fe.uni-lj.si}}

\maketitle
\begin{abstract}
Internet of Things (IoT) can be seen as a cooperation of the various heterogeneous devices with limited performances, that participate in the same system. By they nature, these devices can be very distributed. The core of every IoT system is its discovery and control service.
The Chord protocol is one of the first, simplest and most popular distributed protocol and can be use as a backbone of the discovery and control services of an IoT system.
In this paper we prove the correctness of the Chord protocol using the
logic of time and knowledge. We consider Chord actions that
maintain ring topology with the additional assumption the nodes are not
allowed to fail or leave.

\textbf{Keywords}: IoT, DHT, Chord, correctness, temporal logic, epistemic logic
\end{abstract}

\section{Introduction}

Internet of Things (IoT) paradigm can be defined as \cite{iotdef}: "The pervasive presence around us of a variety of things or objects which, through unique addressing schemes, are able to interact with each other and cooperate with their neighbors to reach common goals." In this framework the smart
objects, which are connected by an Internet-like structure, are able to communicate and exchange information and to enable new forms of interaction among things and people \cite{CDFLMPV14}. The core of every IoT system consists of its discovery and control service.  Usually, the objects which participate in an IoT system have limited computing power, memory and power supply. It is the common thing that various heterogeneous devices participate in the same IoT system. Ordinarily, these devices are highly distributed, so they participate in a distributed, i.e. Peer-to-Peer (P2P), system.

In a homogeneous decentralized P2P system \cite{P2P2010}, many nodes (peers) execute the same application, and have equal rights during that execution. They might join or leave system at any time. In such a framework processes are dynamically distributed to peers, with no centralized control. Thus,  P2P systems have no inherent bottlenecks and can potentially scale very well. Also, those systems are resilient to failures, attacks, etc., since there are no nodes which perform the critical functions of the systems. The main applications of P2P-systems involve: file sharing, redundant storage, real-time media streaming, etc.

P2P systems are frequently implemented in a form of overlay networks \cite{p2p-book}, a structure that is totally independent of the underlying network that is actually connecting devices. Overlay networks represent a logical look on organization of the system resources.
Some of the overlay networks are realized in the form of a Distributed Hash Tables (DHTs), which provides a lookup service similar to a hash table; $\langle key, value\rangle$ pairs are stored in a DHT, and any participating node can efficiently retrieve the value associated with a given key.
Note that \textit{key} is not used as a cryptographic notion, but (following the common practice in DHT-related papers) to represent identifiers of objects. Responsibility for maintaining the mapping
from keys to values is distributed among the peers, in such a way that any change in the set of participants causes a minimal amount of disruption. 
The Chord protocol \cite{Chord,Chord-TR,Chord-IEEE} is one of the first, simplest and most popular DHTs.
The paper \cite{Chord} which introduces Chord was recently awarded the SIGCOMM 2011 Test-of-Time Award.

Because of the simplicity and popularity of the Chord protocol, it was used for the realization of the discovery and/or control service of IoT systems described in \cite{BUC13, CDFLMPV14, EFKS10, PP12, XWWZQZ15}.

As we mentioned above, the discovery and control services are cores of an IoT system, and because of that, in this paper we will prove the correctness of the Chord protocol using the
logic of time and knowledge. We consider the case when the nodes are not
allowed to fail or leave and concern Chord actions that
maintain ring topology. 

We are aware of only a few attempts to formally verify behavior of
DHTs and particularly Chord \cite{BG06, BG07, krishna, liben02, alloy}. We consider them below and compare with our
approach.

The rest of the paper is organized in the following way: in Section \ref{related} we consider other approaches for proving the correctness of the Chord protocol and clearly present the contributions of this paper; Section \ref{chord} presents a short description of the Chord protocol; in Section \ref{logic} we present a logical framework which will be used to prove the correctness of the maintenance of the ring topology of the Chord protocol  with the respect of the fact that nodes are not allowed to departure the system after they join it; the proof is given in Section \ref{proof}; we conclude with Section \ref{conclusion}. In \ref{appendix} we provide detailed proofs of most lemmas and theorems from the paper. 

\section{Related Work and Contributions}\label{related}

\subsection{Related Work}

The Chord protocol is introduced in
\cite{Chord,Chord-TR,Chord-IEEE}. The papers analyze the protocol,
its performance and robustness under the assumption that the
nodes and keys are randomly chosen, and give several theorems that
involve the phrase {\em with high probability}, for example: "With
high probability, the number of nodes that must be contacted to
find a successor in a $N$-node network is $O(\log N)$".

The only statement in the papers \cite{Chord,Chord-TR,Chord-IEEE} which avoids the mentioned
phrase about high probability is Theorem IV.3. It corresponds to
our Lemma \ref{joinnn} and proves that inconsistent states
produced by executing several concurrent joins of the new nodes are transient,
i.e., that after the last node joins the network will form
a cycle. 
More general sequences of concurrent joining and leaving are
considered in \cite{liben02}, where a lower bound of the rate at
which nodes need to maintain the system such that it works
correctly is given with high probability. In this paper we are not considering possible failures and leaves of the nodes. Our intention is that include this segment in our future work.

Anyway, it is not quite clear how to compare these two approaches (deterministic and probabilistic), but in
our opinion there is benefit from both of them. One can argue that
the probabilistic approach, i.e. providing lower bounds of
probabilities, is useful to study robustness of protocols. On the
other hand, it will be useful to describe sequences of actions leading
to (un)stable states of Chord networks, to be able to 
analyze properties of systems that incorporate Chord and
assume its correctness, as it is the case with the discovery and/or control service of an IoT system.

In \cite{krishna} the theory of stochastic processes is used to
estimate the probability that a Chord network is in a particular
state.
In \cite{BG06,BG07} Chord's stabilization algorithm is modelled
using the $\pi$-calculus and it's correctness is established by
proving the equivalence of the corresponding specification and
implementation. Possible departures of nodes from a network are
not examined in this approach.
In \cite{alloy} the Alloy formal language is used to prove
correctness of the pure join model. The same formalization
present several counterexamples to correctness of Chord
ring-maintenance in the general case. 

In \cite{halpern} a joint frame for reasoning about knowledge and linear time is presented, and the  proof of weak completeness for a logic which combines expressions about knowledge with linear time is provided.

As we mentioned in Introduction using DHT or Chord in IoT domain is not a novelty \cite{BUC13, CDFLMPV14, EFKS10, PP12, XWWZQZ15}.
In \cite{BUC13} authors proposed distributed control plane. They consider the problem how to deliver control messages to the devices that are in sleeping mode most of the time. Proposed DHT algorithm is Chord.
The paper \cite{CDFLMPV14} introduce scalable, self-configuring and automated service and resource discovery mechanism based on structured DHT architecture.
The article \cite{EFKS10} presents comparison of the discovery service mechanisms in IoT domain, both traditional and distributed approaches.
In \cite{PP12} authors give the description of a novel discovery service for IoT which adopts DHT approach with multidimensional search domain.
Authors of \cite{XWWZQZ15} presented discovery service for objects carrying RFID tags based on double Chord ring. 
In all these articles, the correctness of the Chord protocol was accepted for granted.

\subsection{Contributions}
In this paper we:
\begin{itemize}
\item provide axiomatization and prove the soundness, strong completeness and decidability of the logic of time and knowledge;
\item describe the Chord protocol using the logic of time and knowledge;
\item prove the correctness of the maintenance of the ring topology of the Chord protocol with the respect of the fact that nodes are not allowed to departure the system after they join it.
\end{itemize}

This work is motivated by the importance of the discovery and control service of an IoT system and the obvious fact that errors in concurrent
systems are difficult to reproduce and find merely by program
testing. This proof could be, also, the foundation for the formal proof 
created using a formal proof assistant (like, Coq or Isabelle/HOL).

\section{Chord Protocol}\label{chord}

The papers \cite{Chord,Chord-TR,Chord-IEEE} introduce the Chord protocol and give the
specification of it in C$++$-like pseudo-code. They present the correctness, performance and robustness of the Chord protocol. Here, we will provide a short description of it.

A number of nodes running the Chord protocol form a ring-shaped network. The main operation supported by Chord is mapping the given key onto a node using consistent hashing.The consistent hashing \cite{conshash} provides load-balancing, i.e., every node receives roughly the same number of keys, and only a few keys are required to be moved when nodes join and leave the network. Chord networks are overlay systems. Thus, each node in a network, that consists of $N$-nodes, needs ``routing'' information about only a few other nodes, $O(\log N)$, and resolves all lookups via $O(\log N)$ messages to other nodes.

\begin{figure}[h]
\centering
\includegraphics[width=0.55\textwidth]{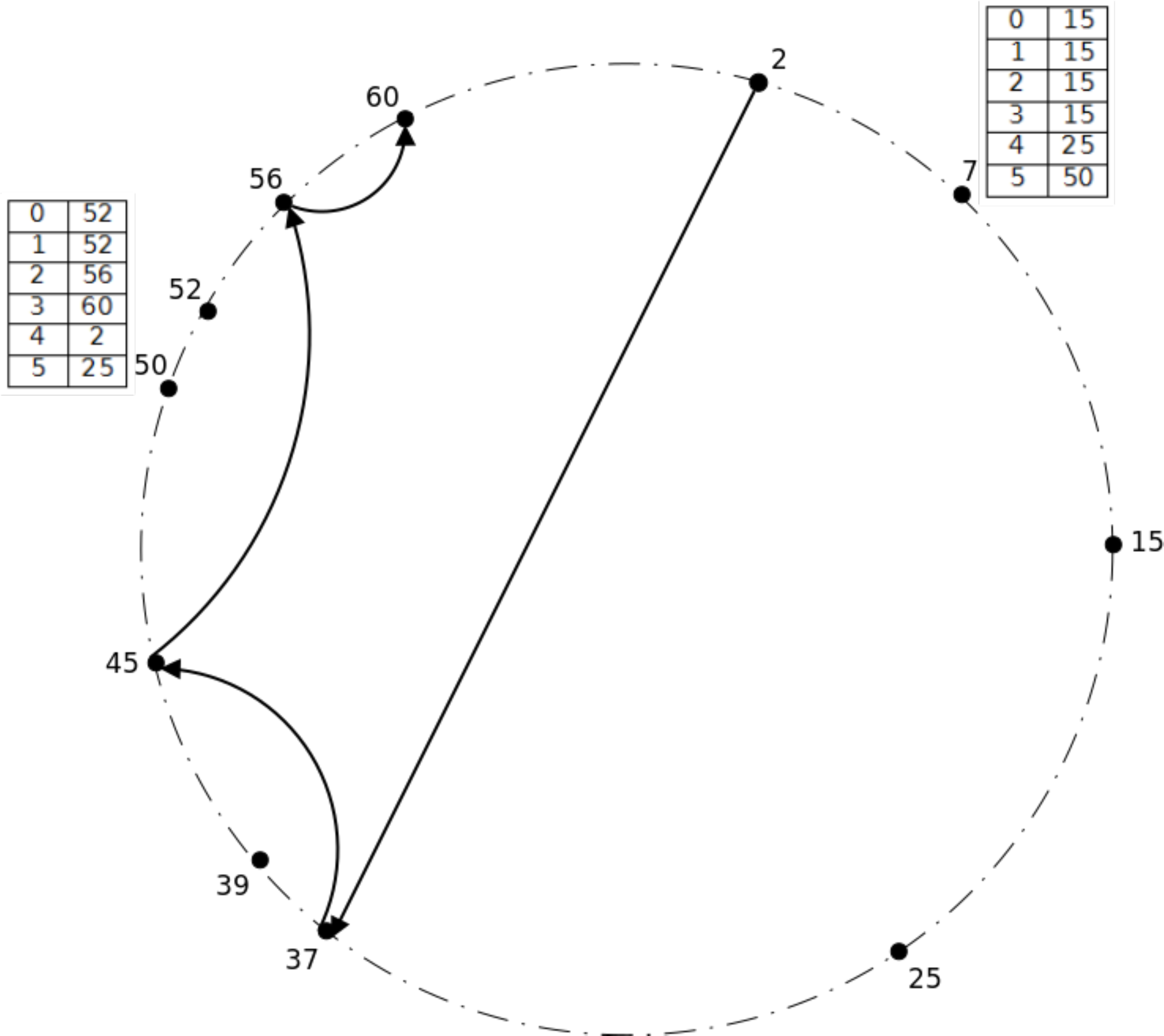}

\caption{Chord lookup procedure}\label{chord_lookup}
\end{figure}

As it is shown, the Chord's stabilization algorithm maintains good
lookup performance despite continuous failure and joining of nodes.
When the network is not stable, i.e., the corresponding ``routing'' information is out of date since nodes join and leave arbitrarily,
the performance degrades.

Identifiers are assigned to nodes and keys by the consistent hash function. The identifier for a node or a key, $hash(node)$ or
$hash(key)$, is produced by hashing IP of the node, or the value of the key.
The length of identifiers, for example $m$ bits), must guarantee that the probability that two objects of the same type are assigned
same identifiers is negligible.
Identifiers are ordered in an identifier circle modulo $2^m$. Then, the key $k$ is assigned to the node such that
$hash(node)=hash(key)$. If such a node does not exist, the key is assigned to the first node in the circle whose identifier is greater
than $hash(key)$.

Every node possesses information on its current successor and predecessor nodes in the identifier circle. To accelerate the lookup
procedure, a node also maintains routing information in the form of the so-called {\em Finger Table} with up to $m$ entries. The $i^{th}$
entry in the table at the node $n$ contains the identifier of the first node $s$ that succeeds $n$ by at least $2^{i-1}$ in the
identifier circle, i.e., $s= successor(n+2^{i-1})$, where $1 \leqslant i \leqslant m$, and all arithmetic is preformed modulo $2^{m}$. Figure \ref{chord_lookup} presents Finger tables of nodes $n_7$ and $n_{50}$.

One node can be aware of only a few other nodes in the system, like node $n_7$ from Figure \ref{chord_lookup} knows for the existence of only 3 other nodes. Some other can have different node identifier in almost every entry in its Finger table, like node $n_{50}$ from Figure \ref{chord_lookup}.

During the lookup procedure, a node forwards a query to the largest element of the Finger table which is smaller than the key used in the query, respect to the used arithmetics. In the example illustrated with Figure \ref{chord_lookup}, if $n_2$ is looking for the responsible node for the key with identifier $57$, it will forward this query to node $n_{37}$, the closes node from its finger table. After, that this query will be forwarded to  $n_{45}$ and $n_{56}$, until it finally ends at $n_{60}$. The answer if $n_{60}$ contains the key and respected value with identifier $57$ will be returned to node that started query, in this case $n_2$.

The
stabilization procedure implemented by Chord must guarantee that each node's finger table, predecessor and successor pointers are up to date. The
procedure runs periodically in the background at each node.
To increase robustness, each Chord node can create a successor list of size $r$, containing the node's first $r$ successors.

\begin{figure}[h]
\begin{center}
\includegraphics[width=0.18\textwidth]{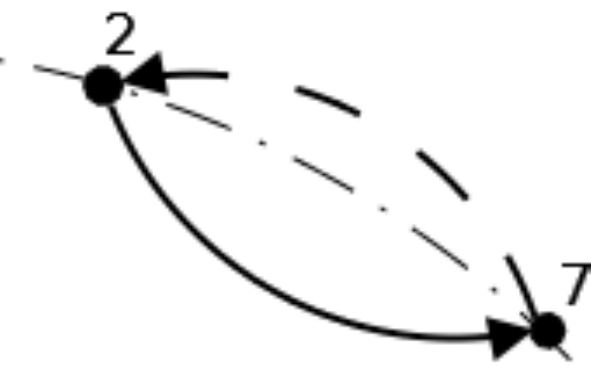}
\includegraphics[width=0.18\textwidth]{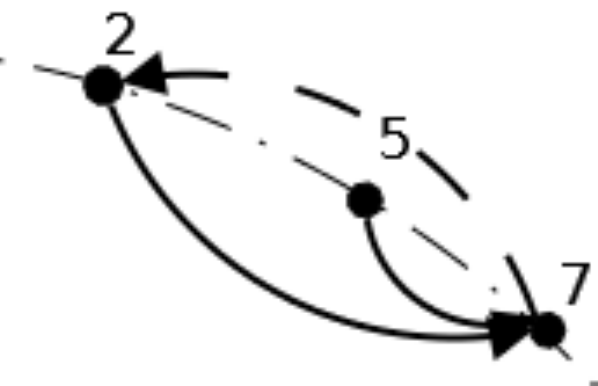}
\includegraphics[width=0.18\textwidth]{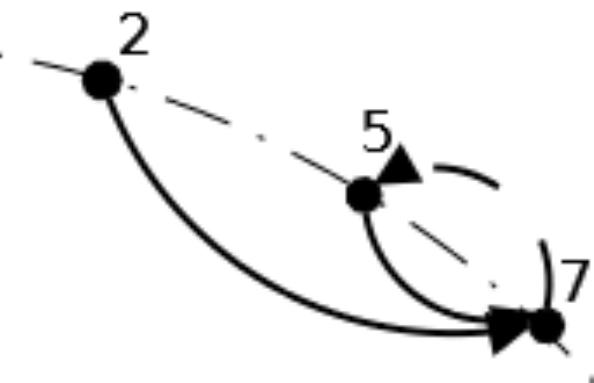}
\includegraphics[width=0.18\textwidth]{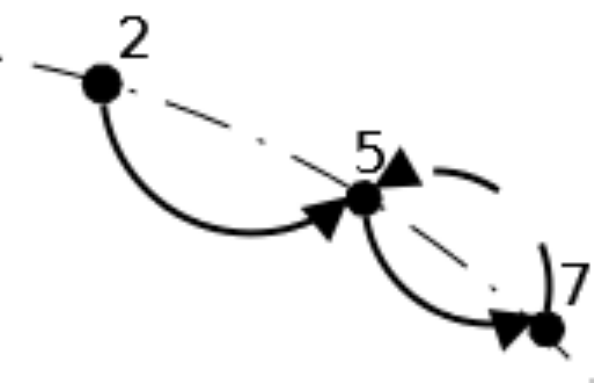}
\includegraphics[width=0.18\textwidth]{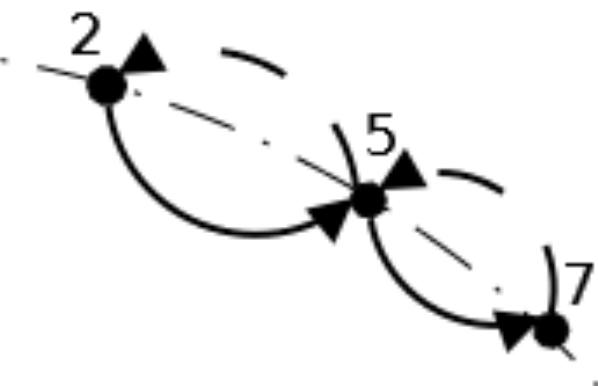}

\caption{Stabilization during the joining of a new node}\label{new_node_join}
\end{center}
\end{figure}

Figure \ref{new_node_join} illustrates the process of joining of the node $n_5$ between nodes $n_2$ and $n_7$. As a first step $n_5$ will set its successor to $n_7$. During the stabilization process $n_7$ will set its predecessor to $n_5$, then $n_2$ will set its successor to $n_5$ and, finally, $n_5$ will set its predecessor to $n_2$.  

Beside the mapping of keys onto the set of nodes, the only other operations
realized by Chord are adding a node to network or removing  a node from a network.
When a node $n$ joins an existing network, certain keys previously
assigned to $n$'s successor now become assigned to $n$. When a
node $n$ leaves the network regularly, it notifies its predecessor
and successor and reassigns all of its keys to the successor.

\section{Logic of Time and Knowledge}\label{logic}

As we mentioned in the previous Section, a system which runs the Chord protocol is a dynamic multi-agent system, where every agent has it own partial view of the surrounding environment. To be able to reason about such system, we need to introduce a framework for formal description of changes of the knowledge of an agent during the time. In this section we present logic of time and knowledge.

\subsection{Syntax}
\noindent

Let $\mathbb{N}$ be the set of non-negative integers. We denote $\mathbf{N}=\{n_0,\ldots n_{m-1} \}$, where $m \in \mathbb{N}$, and then let $\mathbf{N_1} = \mathbf{N} \cup \{u\}$ be the set of propositional variables.

The set $For$ of all formulas is the smallest superset of $\mathbf{N_1}$ which is closed under the following formation rules:

\begin{itemize}
	\item $\langle \phi, \psi \rangle \mapsto \phi \ast \psi$ where $\ast \in \{\succ, \prec\}$ and $\phi, \psi \in \mathbf{N_1}$,
	\item $\langle \phi, \psi, \varphi \rangle \mapsto \phi \mathtt{M} \langle \psi, \varphi \rangle $ where $\phi, \psi,  \varphi \in \mathbf{N}$,
	\item $\psi \mapsto \ast \psi$ where $\ast \in \{\neg, \bigcirc, \CIRCLE, \mathtt{K}_i\}$,
	\item $\langle \phi, \psi \rangle \mapsto \phi \ast \psi$ where $\ast \in \{\wedge, \mathtt{U}, \mathtt{S}\}$. 
\end{itemize}

The operators $\succ$ and $\prec$ represent relations successor and predecessor of a node. The tip of the "arrow" is pointing to the node with "greater" identifier, with respect to the ordering determined by the ring shaped Chord network.
We will use abbreviation $n_i \succ^2 n_k$ for $n_i, n_k \in N$ iff there is an $n_j \in N$ such that $n_i \succ n_j$ and $n_j \succ n_k$, and $n_k \prec^2 n_i$ for $n_i, n_k \in N$ iff there is an $n_j \in N$ such that $n_k \prec n_j$ and $n_j \prec n_i$. Similarly, we can define $n_j \succ^i n_k$, as well as $n_j \prec^i n_k$ for $n_j, n_k \in N$ and $0< i < m$. Figure \ref{succprec} illustrates the relations $\succ$, $\prec$ (Figure \ref{chordsuccpred}) and $\succ^i$ (Figure \ref{chordsucci}). 

\begin{figure}[h]
\centering
\begin{subfigure}{0.45\textwidth}
\centering
\includegraphics[height=3cm]{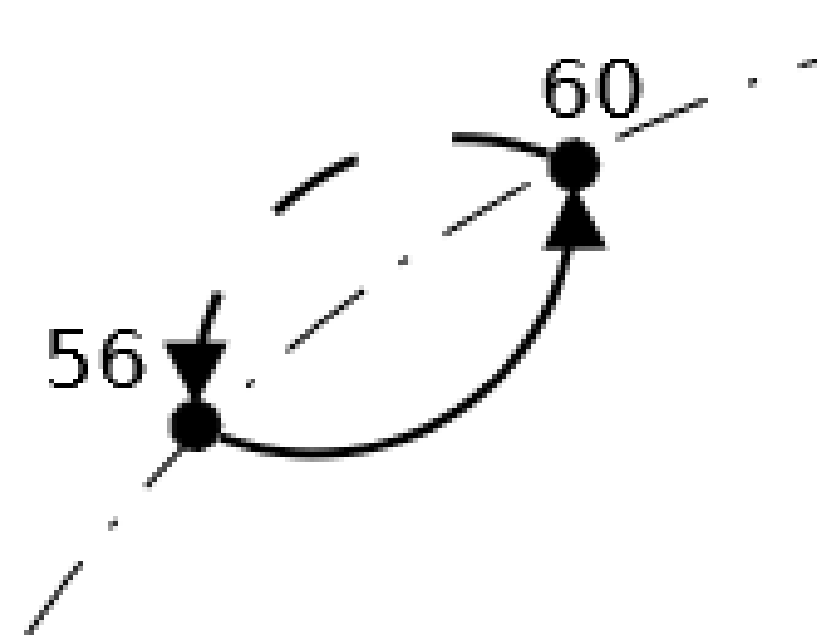}
\caption{$n_{56} \succ n_{60}$ and $n_{60} \prec n_{56}$}\label{chordsuccpred}
\end{subfigure}
\begin{subfigure}{0.45\textwidth}
\centering
\includegraphics[height=3cm]{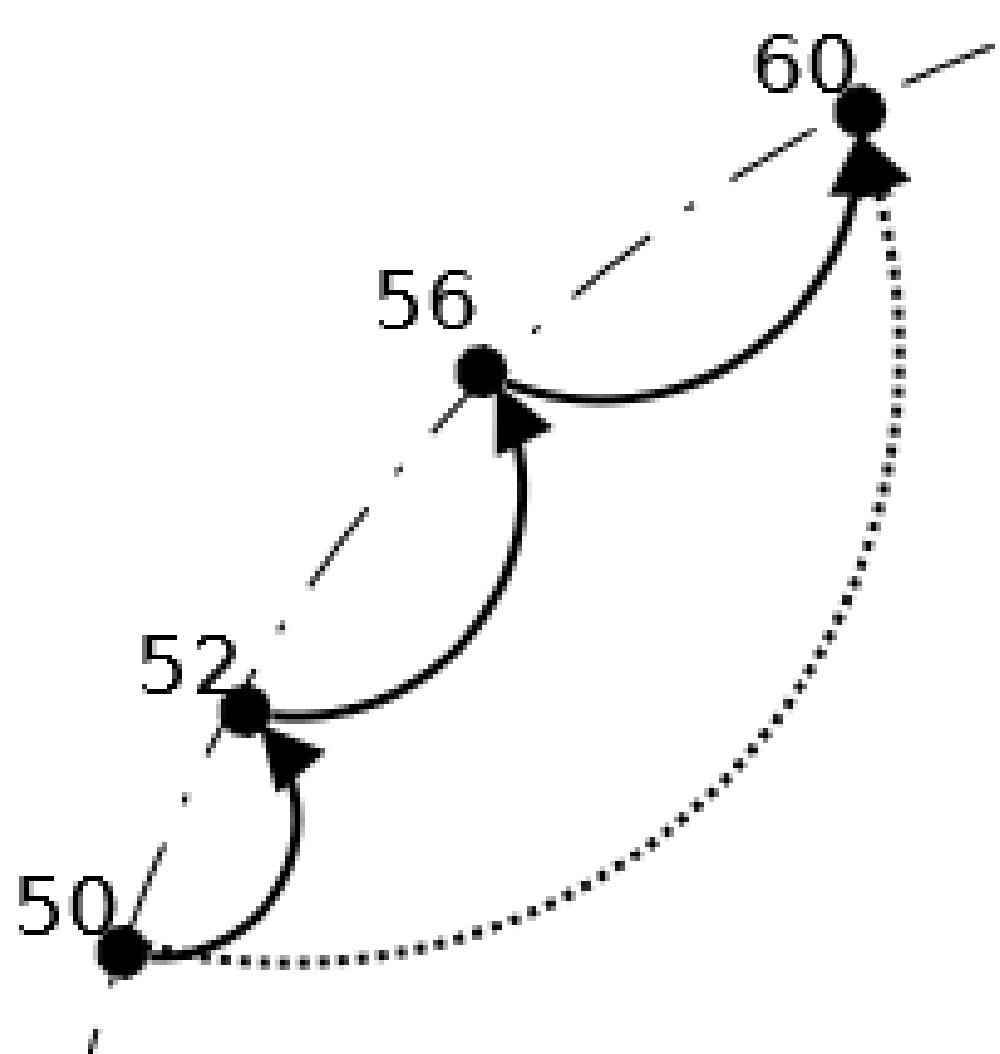}
\caption{$n_{50} \succ^3 n_{60}$}\label{chordsucci}
\end{subfigure}
\caption{Examples of $\succ$, $\prec$ and $\succ^i$}\label{succprec}
\end{figure}

The operators $\neg$ and $\wedge$ represent standard logical negation and conjunction. The operators $\bigcirc$, $\CIRCLE$, $\mathtt{U}$ and $\mathtt{S}$ are standard temporal operators Next, Previous, Until and Since. The operator $\mathtt{K}_i$ represents the knowledge of the agent $i$.

The remaining logical and temporal connectivities $\vee$, $\rightarrow$, $\leftrightarrow$, $\mathtt{F}$, $\mathtt{G}$, $\mathtt{P}$, $\mathtt{H}$ are defined in the usual way:

\begin{itemize}
	\item $\phi \vee \psi =_{def} \neg(\neg \phi \wedge \neg \psi)$,
	\item $\phi \rightarrow \psi =_{def} \neg \phi \vee \psi$,
	\item $\phi \leftrightarrow \psi =_{def} (\phi \rightarrow \psi) \wedge (\psi \rightarrow \phi)$,
	\item $\mathtt{F} \psi =_{def} (\psi \rightarrow \psi) \mathtt{U} \psi$,
	\item $\mathtt{G} \psi =_{def} \neg \mathtt{F} \neg \psi$,
	\item $\mathtt{P} \psi =_{def} (\psi \rightarrow \psi) \mathtt{S} \psi$,
	\item $\mathtt{H} \psi =_{def} \neg \mathtt{P} \neg \psi$,
	\item $\bigcirc^0 \psi =_{def} \psi; \bigcirc^{n+1} \psi=\bigcirc \bigcirc^{n} \psi, n \geqslant 0$,
	\item $\CIRCLE^0 \psi =_{def} \psi; \CIRCLE^{n+1} \psi=\CIRCLE \CIRCLE^{n} \psi, n \geqslant 0$.
\end{itemize}

Nonempty sets of formulas will be called theories.

In this paper we will consider time flow which is isomorphic to the set $\mathbb{N}$. We will take into account both future and past. Since we are dealing with a multi-agent system were agents have to share knowledge among them the obvious choice is to use the logic of time and knowledge, similarly like in \cite{halpern}.

We define $\Phi_{k}(\tau, (\theta_j)_{j\in \mathbb{N}})$ as a $k$-nested implication for the knowledge of an
agent $i$ and for formula $\tau$ based on the sequence of formulas $(\theta_j)_{j\in \mathbb{N}}$ in the following recursive way:

$$\Phi_{0}(\tau, (\theta_j)_{j\in \mathbb{N}}) = \theta_0 \rightarrow \tau,$$ 
$$\Phi_{k+1}(\tau, (\theta_j)_{j\in \mathbb{N}}) = \theta_{k+1} \rightarrow \mathtt{K}_i\Phi_{k}(\tau, (\theta_j)_{j\in \mathbb{N}}), \text{ for some } 0\leqslant i<m.$$

For example, $\Phi_{3}(\tau, (\theta_j)_{j\in \mathbb{N}}) = \theta_{3} \rightarrow \mathtt{K}_i(\theta_{2} \rightarrow \mathtt{K}_j(\theta_{1} \rightarrow \mathtt{K}_i(\theta_0 \rightarrow \tau))), 0\leqslant i,j <m$. This
definition follows the form of probabilistic k-nested implication presented in \cite{milos, probknow}.

\subsection{Semantics}
\noindent

We will defined models as Kripke's structures.

\begin{definition}
	
	A model $\mathcal{M}$ is any tuple $\langle R, W, \pi, \mathcal{K}\rangle$ such that
	\begin{itemize}
		\item Set of all possible runs $R$:
		\begin{itemize}
			\item $r^j = \{\langle x_{0}^{j,t}, \ldots , x_{m-1}^{j,t}\rangle| t=0,1,2\ldots\}, x_i^{j,t} \in \{\top,\bot\}$,
			\item $R=\{r^j; j=0,1,\ldots \}$,
			\item Restriction: if $x_{i}^{j,t}=\top$ then $x_{i}^{j,t+1}=\top$
		\end{itemize}
		\item $W$ set of time instances (the time flow isomorphic to $\mathbb{N}$),
		\item $\pi: R \times  W \times \mathbf{N_1} \rightarrow \{\top,\bot\}$ truth assignment, such that:
		\begin{itemize}
			\item $\pi(r^j, t, n_l) = x_l^{j,t}$,
			\item $\pi(r^j, t, u) = \top$
		\end{itemize}
		\item $\mathcal{K}$ possibility relations: $\mathcal{K}_i \subset (R \times W)^2$: $\langle r^j, t \rangle \mathcal{K}_i \langle r^{j'}, t' \rangle$ iff $x_{i}^{j,t}=x_{i}^{j',t'}$.
	\end{itemize}
\end{definition}

	\begin{figure}[h]
	\begin{center}
		\includegraphics[width=\textwidth]{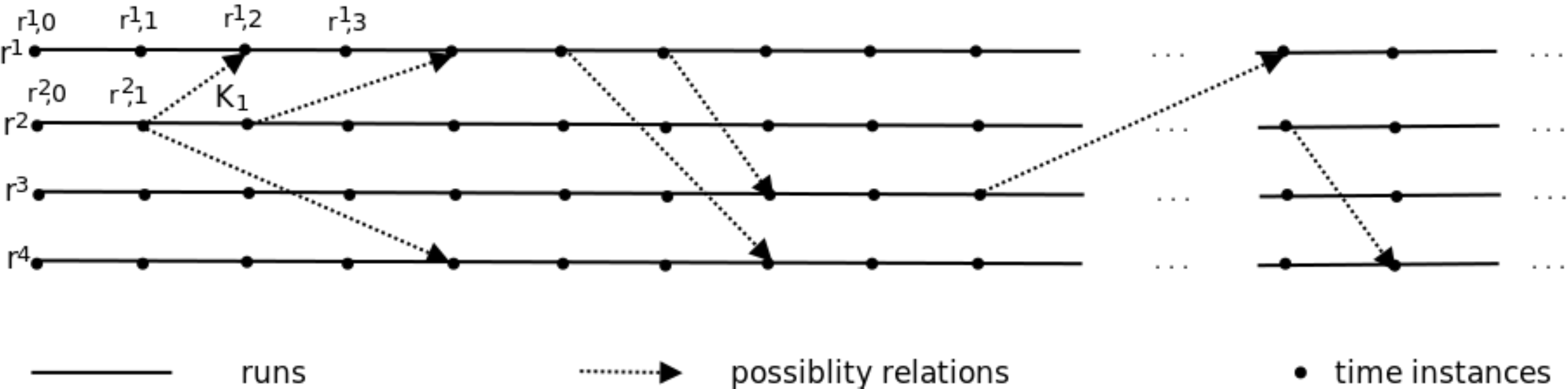}
		\caption{Kripke model}\label{slika_modela}
		\end{center}
	\end{figure}

Figure \ref{slika_modela} illustrates a Kripke model which contains the runs $r^1, r^2, r^3, r^4$, where $r^1$ is the sequence of $\langle r^1,0\rangle$, $\langle r^1,1\rangle$, $\langle r^1,2\rangle$, etc. and similarly for other runs. In this model, for example $\langle r^2, 1 \rangle \mathcal{K}_1 \langle r^{2}, 2 \rangle$, etc.

An $n_i \in \mathbf{N}$ is $true$ in the time instant $t$ in the run $r^j$ $(x_{i}^{j,t} = \top)$ if the Chord network node $i$ is active in the corresponding realization of the network. We define the set of propositional variables which represent the active nodes of Chord network as $\mathbf{N_a}= \{n_i | n_i \text{ is } true\}$.
For $n_i, n_j, n_k \in \mathbf{N}$ we define the relation $\mathtt{M}$ which represents the fact that $n_i$ is the member of the ring interval $(n_j, n_k]$ as: $n_i \mathtt{M} \langle n_j, n_k \rangle $ is $true$ iff 
\begin{itemize}
	\item $j=k$,
	\item $j<k$ and $j < i \leqslant k$,
	\item $k<j$ and $\neg (k <i \leqslant j)$.
\end{itemize}

\subsection{Satisfiability relation}

A formula is satisfiable if it is possible to find an interpretation, i.e. model, that makes that formula true.

\begin{definition}
	Let $\mathcal{M} = \langle R, W, \pi, \mathcal{K}\rangle$ be any model. The satisfiability relation $\models$ (formula $\alpha$ is satisfied in a time instance of a run $R \times W \models \alpha$) is defined recursively as follows:
\begin{enumerate}
	\item $\langle r^j, t \rangle \models n$ iff  $\pi(r^j, t, n)=true$, $n \in \mathbf{N_1}$
	
	\item $\langle r^j, t \rangle\models \alpha \wedge \beta$ iff  $\langle r^j, t \rangle \models \alpha$ and $\langle r^j, t \rangle \models \beta$
	\item $\langle r^j, t \rangle \models \neg \alpha$ iff  not $\langle r^j, t \rangle \models \alpha$ ( $\langle r^j, t \rangle \not\models \alpha$)
	\item $\langle r^j, t \rangle \models \bigcirc \alpha$ iff $\langle r^j, t+1 \rangle \models \alpha$
	\item $\langle r^j, t+1 \rangle \models \CIRCLE \alpha$ iff $\langle r^j, t \rangle \models \alpha$
	\item $\langle r^j, 0 \rangle \models \CIRCLE \alpha$
	\item $\langle r^j, t \rangle \models \alpha \mathtt{U} \beta$ iff there is a $i\geqslant 0$ such that $\langle r^j, t+i \rangle \models \beta$, and for every $k$, such that $0\leqslant k<i$, $\langle r^j, t+k \rangle \models \alpha$
	\item $\langle r^j, t \rangle \models \alpha \mathtt{S} \beta$ iff there is a $0 \leqslant i \leqslant t$ such that $\langle r^j, t-i \rangle \models \beta$, and for every $k$, such that $0\leqslant k<i$, $\langle r^j, t-k \rangle \models \alpha$
	\item $\langle r^j, t \rangle \models \mathtt{K}_i \alpha$ iff  $\langle r^{j'}, t' \rangle \models \alpha$ for all $\langle r^{j'}, t' \rangle \in \mathcal{K}_i(\langle r^j, t \rangle)$
	\item $\langle r^j, t \rangle \models n_i \succ n_j$ iff
	\begin{enumerate}
		\item $i=j$ and $\langle r^j, t \rangle \models n_i \wedge \mathtt{K}_i (\bigwedge_{n_j \in \mathbf{N}\backslash \{n_i\}} \neg n_j)$

		\item $i < j \leqslant m$ and $\langle r^j, t \rangle \models n_i \wedge n_j \wedge \mathtt{K}_i (\bigwedge_{k=i+1}^{j-1} \neg n_k) \wedge \mathtt{K}_i n_j$ 
		
		\item $j < i < m$ and $\langle r^j, t \rangle \models n_i \wedge n_j \wedge \mathtt{K}_i(\bigwedge_{k=i+1}^{m}  \neg n_k) \wedge \mathtt{K}_i (\bigwedge_{k=1}^{j-1} \neg n_k) \wedge \mathtt{K}_i n_j$  
		
		\item $j<i$ and $i = m$ and $\langle r^j, t \rangle \models n_i \wedge n_j \wedge \mathtt{K}_i (\bigwedge_{k=1}^{j-1} \neg n_k) \wedge \mathtt{K}_i n_j$  	  
		
	\end{enumerate} 
	\item $\langle r^j, t \rangle \models n_j \prec n_i$ iff
	\begin{enumerate}
		\item $i=j$, $t \neq 0$ and $\langle r^j, t \rangle \models n_i \wedge \mathtt{K}_i( \bigwedge_{n_k \in \mathbf{N}\backslash\{n_i\}} \neg n_k)$ 
		\item $i < j \leqslant m$ and $\langle r^j, t \rangle \models n_i \wedge n_j \wedge \mathtt{K}_i (\bigwedge_{k=i+1}^{j-1} \neg n_k) \wedge \mathtt{K}_i n_j$ 
		
		\item $j < i < m$ and $\langle r^j, t \rangle \models n_i \wedge n_j \wedge \mathtt{K}_i(\bigwedge_{k=i+1}^{m}  \neg n_k) \wedge \mathtt{K}_i (\bigwedge_{k=1}^{j-1} \neg n_k) \wedge \mathtt{K}_i n_j$  
		
		\item $j<i$ and $i = m$ and $\langle r^j, t \rangle \models n_i \wedge n_j \wedge \mathtt{K}_i (\bigwedge_{k=1}^{j-1} \neg n_k) \wedge \mathtt{K}_i n_j$ 
		\item $n_i=u$ and  $\langle r^j, t \rangle \models \neg n_j \vee (n_j \wedge (\CIRCLE (\neg \mathtt{K}_k (n_k \succ n_j))))$
		
	\end{enumerate}
\end{enumerate}
\end{definition}

\subsection{Axiomatization}
\noindent

The axioms of our theory are all instances of the following schemata:

\begin{enumerate}
	\item[A1] instances of tautologies
	
	\item[AT1] $\neg \bigcirc \alpha \leftrightarrow \bigcirc \neg \alpha$
	\item[AT2] $\bigcirc (\alpha \rightarrow \beta) \rightarrow (\bigcirc \alpha \rightarrow \bigcirc \beta)$ 
	\item[AT3] $\alpha \mathtt{U} \beta \leftrightarrow \beta \vee (\alpha \wedge \bigcirc (\alpha \mathtt{U} \beta))$
	\item[AT4] $\alpha \mathtt{U} \beta \rightarrow \mathtt{F} \beta$
	
	\item[AT5] $\CIRCLE (\alpha \rightarrow \beta) \rightarrow (\CIRCLE \alpha \rightarrow \CIRCLE \beta)$ 
	\item[AT6] $\alpha \mathtt{S} \beta \leftrightarrow (\CIRCLE \bot \wedge \beta) \vee (\neg \CIRCLE \bot \wedge (\beta \vee (\alpha \wedge \CIRCLE (\alpha \mathtt{S} \beta))))$
	\item[AT7] $\alpha \mathtt{S} \beta \rightarrow \mathtt{P} \beta$
	\item[AT8] $\alpha \leftrightarrow \bigcirc \CIRCLE \alpha$
	\item[AT9] $\alpha \leftrightarrow \CIRCLE \bot \vee (\CIRCLE \neg \bot \wedge \CIRCLE \bigcirc \alpha)$
	\item[AT10] $\mathtt{P} \bot$
	
	\item[AT11] $n_i \rightarrow \mathtt{G} n_i$
	
	\item[AK1] $\varphi \leftrightarrow \mathtt{K}_i \varphi$, $\varphi = \pm n_i$
	
	\item[AK2] $(\mathtt{K}_i \alpha \wedge \mathtt{K}_i (\alpha \rightarrow \beta)) \rightarrow \mathtt{K}_i \beta$
	\item[AK3] $\mathtt{K}_i \alpha \rightarrow \alpha$ 
	\item[AK4] $\mathtt{K}_i \alpha \rightarrow \mathtt{K}_i \mathtt{K}_i \alpha$
	\item[AK5] $\neg \mathtt{K}_i \alpha \rightarrow \mathtt{K}_i \neg \mathtt{K}_i \alpha$
	
	\item[AS1] $n_i \succ n_j \rightarrow \bigwedge_{n_k \in \mathbf{N_1} \backslash \{n_j\}}\neg (n_i  \succ n_k), n_i, n_j \in \mathbf{N}$ 
	\item[AS2] $n_i \prec n_j \rightarrow \bigwedge_{n_k \in \mathbf{N_1} \backslash \{n_j\}}\neg (n_i  \prec n_k), n_i, n_j \in \mathbf{N}$
	\item[AS3] $n_i \prec n_j \rightarrow \bigwedge_{n_k \in \mathbf{N_1} \backslash \{n_i\}}\neg (n_k  \prec n_j), n_i, n_j \in \mathbf{N}$
	\item[AS4] $n_i \prec n_j \rightarrow n_j \succ n_i, n_i, n_j \in \mathbf{N}$
	\item[AS5] $n_i \succ n_j \rightarrow \mathtt{K}_i(n_i \succ n_j), n_i, n_j \in \mathbf{N}$
	\item[AS6] $((n_i \succ n_j) \wedge n_k \mathtt{M} \langle n_i, n_j \rangle \wedge \bigcirc(\neg \mathtt{K}_i n_k)) \rightarrow \bigcirc(n_i \succ n_j), n_i, n_j, n_k \in \mathbf{N}$ 
\end{enumerate}

Inference rules:
\begin{enumerate}
	\item[MP] from $\alpha$ and $\alpha \rightarrow \beta$ infer $\beta$
	\item[RTN] from $\alpha$ infer $\bigcirc \alpha$
	\item[RKN] from $\alpha$ infer $\mathtt{K}_i \alpha$
	\item[RI] from $\Phi_{k}(\neg((\bigwedge_{l=0}^i \bigcirc^l \alpha) \wedge \bigcirc^{i+1} \beta), (\theta_j)_{j\in \mathbb{N}})$ for all $i \geqslant 0$ infer $\Phi_{k}(\neg (\alpha \mathtt{U} \beta), (\theta_j)_{j\in \mathbb{N}})$
\end{enumerate}

[AT1 -- AT10] are standard axioms of the linear temporal logic. [AT11] takes into consideration specificity of our model and the restriction that when some $n_i$ become $\top$, then it will never be $\bot$. While [AK1] takes into consideration specificity of our model, [AK2 -- AK5] are standard axioms for reasoning about knowledge.
[AS1] says that a node can have only one successor. 
[AS2] says that a node can be predecessor of only one node.
[AS3] says that a node can have only one predecessor.
[AS4] says that if a node is predecessor of some other node, that other node has to be its successor.
[AS5] says that if a node $n_i$ has the successor $n_j$ than it knows that $n_j$ is its successor.
[AS6] says when the current successor will be the successor in the next time instance.

[MP] is modus ponens, [RTN] and [RKN] resemble necessitations,and [RI] is the infinitary inference rule that characterize the Until operator.

\subsection{Soundness, Completeness and Decidability}
\noindent

In this part we will prove that our system is sound, complete and decidable. Informally speaking, the soundness means that we cannot prove anything that is wrong, the completeness means that we can prove everything that is right, and the decidability means that there is an effective method for determining whether arbitrary formula is a theorem of our logical system.

The inference relation $\vdash$ is defined as follows:

\begin{definition}
	We say that $\alpha$ is syntactical
	consequence of a set of formulas $T$ (or that $\alpha$ is deducible or derivable from $T$) and write $T\vdash \alpha$ iff there exists an at
	most countably infinite sequence of $\alpha_0$, \ldots , $\alpha_{\phi}$ such that $\alpha_{\phi}$ = $\alpha$ and for all $\beta \leqslant \phi$, $\alpha_{\beta}$ is an instance of some axiom, $\alpha_{\beta} \in T$, or $\alpha_{\beta}$ can be obtained from some previous members of the sequence
	by an application of some inference rule. A formula $\alpha$ is a theorem ($\vdash \alpha$) if it is deducible from the empty set. The rules {\em [RTN]} and {\em [RKN]}  can be applied only to theorems.
\end{definition}

\begin{definition}
	A set T is inconsistent iff $T \vdash \bot$, otherwise it is consistent. A set $T$ of formulae is maximal if for every formula $\alpha$ either $\alpha \in T$ or $\neg \alpha \in T$. A set $T$ is deductively closed if for
	every formula $\alpha$, if $T \vdash \alpha$, then $\alpha \in T$.
\end{definition}


\begin{restatable}{theorem}{soundness}[Soundness]\label{soundness}
	$ \vdash \alpha$ implies $ \models \alpha$.
\end{restatable}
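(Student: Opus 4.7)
The plan is to prove soundness by induction on the length of a derivation, which because of the infinitary rule RI may be any countable ordinal. In the inductive step I would show that (i) every instance of an axiom schema is valid in every model $\mathcal{M}$ at every $\langle r^j,t\rangle$, and (ii) every inference rule preserves validity from premises to conclusion.

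The axiom part is largely routine. Tautologies (A1) hold by ordinary Boolean semantics. AT1--AT10 are the standard axioms for linear time with future and past and are each obtained by reading off the recursive clauses for $\bigcirc$, $\CIRCLE$, $\mathtt{U}$, $\mathtt{S}$; AT8--AT10 need a separate look at $t=0$, where $\CIRCLE\bot$ holds vacuously. AT11 is immediate from the monotonicity built into the definition of a run, namely $x_i^{j,t}=\top \Rightarrow x_i^{j,t+1}=\top$. AK2--AK5 are the S5 axioms for $\mathtt{K}_i$ and hold because $\mathcal{K}_i$, defined by equality of $x_i^{j,t}$, is plainly an equivalence relation; AK1 is specific to our setting and follows because $\mathcal{K}_i$ preserves the truth value of $n_i$ itself. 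The relational axioms AS1--AS6 require careful inspection of clauses 10--11: for AS1--AS3 I would exploit the fact that the semantic definition of $n_i\succ n_j$ pins down $n_j$ uniquely via the conjunction $\mathtt{K}_i(\bigwedge_{k\neq j}\neg n_k)$, AS4 follows by symmetry of the two definitions, AS5 is immediate since the characterising conjunction is already under $\mathtt{K}_i$, and AS6 combines the definition of $\succ$, the interpretation of $\mathtt{M}$, and the monotonicity of node activation to argue that the only event that can break $n_i\succ n_j$ is the appearance of a new node strictly between them.

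For the rules, MP is standard, and RTN and RKN preserve validity because they may be applied only to theorems, whose validity we have already established by the induction hypothesis. The main obstacle is the infinitary rule RI. My plan is to argue contrapositively: assume the conclusion $\Phi_{k}(\neg(\alpha\mathtt{U}\beta),(\theta_j))$ fails at some $\langle r^j,t\rangle$ of some $\mathcal{M}$. Unwinding the $k$-nested implication along the appropriate $\mathcal{K}_i$-chains, I obtain a point $\langle r^{*},t^{*}\rangle$ at which both $\theta_0$ and $\alpha\mathtt{U}\beta$ hold. By clause 7 of satisfiability there is a least $i_0\geq 0$ with $\langle r^{*},t^{*}+i_0\rangle\models\beta$ and $\langle r^{*},t^{*}+l\rangle\models\alpha$ for every $0\leq l<i_0$. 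Retracing the $\mathcal{K}_i$-chain in reverse, I then argue that exactly the premise indexed by $i=i_0-1$, namely $\Phi_{k}(\neg((\bigwedge_{l=0}^{i_0-1}\bigcirc^l\alpha)\wedge\bigcirc^{i_0}\beta),(\theta_j))$, also fails at $\langle r^j,t\rangle$, contradicting the assumption that every premise of RI is valid. The delicate points are (a) making precise the alternation of agents in the nested $\mathtt{K}_i$'s of $\Phi_k$ and checking that the witness $\langle r^{*},t^{*}\rangle$ can indeed be pushed back to refute exactly one premise, and (b) handling the boundary case $i_0=0$ (where $\beta$ already holds at $t^{*}$) uniformly with the rest, using the convention that the empty conjunction is $\top$.
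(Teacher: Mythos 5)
Your proposal is correct and follows essentially the same route as the paper's proof: a case-by-case semantic verification of the axiom schemata (with exactly the observations you list for AT11, AK1 and AS1--AS6) together with a contrapositive argument that [RI] preserves validity by unwinding the nested $\mathtt{K}_i$'s of $\Phi_k$ to reach a point where $\theta_0\wedge(\alpha\mathtt{U}\beta)$ holds and then refuting the premise determined by the witness $i_0$; the paper merely packages this unwinding as an explicit induction on $k$. One caveat applies equally to both arguments: for $i_0=0$ the witness gives only $\beta$ at $t^{*}$, and no premise of [RI] as stated (which begins at $i=0$, i.e.\ $\alpha\wedge\bigcirc\beta$) is refuted by that configuration, so your ``empty conjunction'' convention implicitly requires a premise indexed $i=-1$ that the rule does not provide --- a boundary issue the paper's own proof also passes over in silence.
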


\begin{restatable}{theorem}{maxext}\label{max_ext}
Every consistent set of formulas $T$ can be extended to a maximal consistent set $T^*$.
\end{restatable}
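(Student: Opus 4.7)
The plan is to adapt the standard Lindenbaum construction to accommodate the infinitary inference rule [RI]. Since the vocabulary is countable, enumerate all formulas of the language as $\alpha_0, \alpha_1, \alpha_2, \ldots$ and inductively build an increasing chain $T = T_0 \subseteq T_1 \subseteq T_2 \subseteq \cdots$ of consistent sets. At stage $i+1$, if $T_i \cup \{\alpha_i\}$ is consistent, set $T_{i+1} = T_i \cup \{\alpha_i\}$; otherwise, the deduction theorem (available here because [RTN] and [RKN] are explicitly restricted to theorems) gives that $T_i \cup \{\neg \alpha_i\}$ is consistent, and $\neg \alpha_i$ is to be added. Finally, let $T^* = \bigcup_{i \geqslant 0} T_i$.

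The main obstacle is that this naive scheme is insufficient on its own because [RI] has infinitely many premises. Concretely, $T^*$ could consistently contain each stage's commitments while still containing $\neg \Phi_{k}(\neg(\alpha \mathtt{U} \beta), (\theta_j)_{j\in \mathbb{N}})$ together with every premise $\Phi_{k}(\neg((\bigwedge_{l=0}^i \bigcirc^l \alpha) \wedge \bigcirc^{i+1} \beta), (\theta_j)_{j\in \mathbb{N}})$, $i \geqslant 0$, so that a single application of [RI] would expose $T^*$ as inconsistent at the limit. To prevent this, whenever the rejected formula $\alpha_i$ has the form $\Phi_{k}(\neg(\alpha \mathtt{U} \beta), (\theta_j)_{j\in \mathbb{N}})$, I would strengthen the step and add to $T_{i+1}$, together with $\neg \alpha_i$, a \emph{witness} $\neg \Phi_{k}(\neg((\bigwedge_{l=0}^n \bigcirc^l \alpha) \wedge \bigcirc^{n+1} \beta), (\theta_j)_{j\in \mathbb{N}})$ for some $n \geqslant 0$. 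The existence of such an $n$ is the key lemma and is nothing but the contrapositive of [RI]: if no $n$ worked, then $T_i \cup \{\neg \alpha_i\}$ would prove every premise of [RI] and hence, by a single application of [RI], prove $\alpha_i$ itself, contradicting the consistency of $T_i \cup \{\neg \alpha_i\}$ already established above.

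With this refinement the remaining verifications are largely bookkeeping. Maximality is immediate from the enumeration: for each $\alpha_i$, exactly one of $\alpha_i$, $\neg \alpha_i$ enters $T^*$ by construction. Consistency of $T^*$ is inherited from the consistency of every $T_i$: any derivation using only the finitary axioms together with [MP], [RTN], [RKN] involves finitely many hypotheses and hence fits inside some single $T_i$; a derivation that invokes [RI] cannot succeed either, since the witness step guarantees that for every $\Phi_{k}(\neg(\alpha \mathtt{U} \beta), (\theta_j)_{j\in \mathbb{N}}) \notin T^*$ at least one premise of [RI] is also absent from $T^*$, blocking the infinitary inference. The delicate step is thus the witness existence argument; once it is in place, the rest of the construction is a transfinite-free, entirely level-by-level induction.
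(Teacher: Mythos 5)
Your proposal is correct and follows essentially the same route as the paper: a Lindenbaum-style enumeration in which a rejected formula of the form $\Phi_{k}(\neg(\alpha \mathtt{U} \beta), (\theta_j)_{j\in \mathbb{N}})$ is accompanied by a witness $\neg \Phi_{k}(\neg((\bigwedge_{l=0}^{n_0} \bigcirc^l \alpha) \wedge \bigcirc^{n_0+1} \beta), (\theta_j)_{j\in \mathbb{N}})$, whose existence is secured exactly as you argue, via the deduction theorem and the contrapositive of [RI]. Your closing argument that the witnesses block any application of [RI] at the limit matches the paper's verification that $T^*$ is deductively closed, so there is nothing to add.
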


\noindent
\textbf{Canonical structure}. We define a special, so called canonical structure $\mathbb{M^*} = \langle R, W, \pi, \mathcal{K}\rangle$. Let $\mathcal{T}$ be the set of all maximal consistent sets. Let
$T \in \mathcal{T}$. We define a run inductively as: $T^0_{j} = T$, and $T^t_{j} = \{ \alpha : \bigcirc \alpha \in T^{t-1}_{j} \}, t>0$.

We denote:
\begin{itemize}
	\item $r^j_t = \langle x^{j,t}_0, \ldots, x^{j,t}_{m-1}\rangle$, where $x^{j,t}_l = \top$ if $n_l \in T^t_j$, and $x^{j,t}_l = \bot$ otherwise,
	\item $r^j = \langle x^{j,t}_0, \ldots, x^{j,t}_{m-1}\rangle, t = 0, 1 \ldots$,
	\item $R=\{r^j\}$.
\end{itemize}

Also:
\begin{itemize}
	\item $\pi (r^j, t, n_l) = x_l^{j,t}$,
	\item $\langle r^j, t \rangle \mathcal{K}_i \langle r^{j'}, t' \rangle$ iff $n_i \in T^t_j \Leftrightarrow n_i \in T^{t'}_{j'}$.
\end{itemize}

\begin{restatable}{theorem}{strongcompletness}[Strong completeness]\label{completness}
	Every consistent set of formulas is satisfiable.
\end{restatable}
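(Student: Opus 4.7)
The plan is to follow the standard Henkin-style canonical-model strategy, adapted to accommodate the infinitary rule [RI]. Given a consistent set $T$, I would invoke Theorem \ref{max_ext} to extend $T$ to a maximal consistent set $T^*$, and then pick the index $j$ in the canonical structure $\mathbb{M^*}$ for which $T^0_j = T^*$. It then suffices to establish a Truth Lemma of the form
\[
\langle r^j, t \rangle \models \alpha \iff \alpha \in T^t_j
\]
for every formula $\alpha$ and every time instant $t$, since this immediately gives $\langle r^j, 0 \rangle \models T$.

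Before the induction, I would verify two structural facts about the canonical structure. First, each $T^t_j$ is itself maximal consistent: the transition $T^t_j \mapsto T^{t+1}_j = \{\beta : \bigcirc\beta \in T^t_j\}$ preserves consistency and maximality thanks to [AT1] (commutation of $\bigcirc$ with negation), [AT2] (distribution), and the necessitation rule [RTN]. Second, the monotonicity required of runs, namely that $x_i^{j,t} = \top$ forces $x_i^{j,t+1} = \top$, is exactly axiom [AT11]: if $n_i \in T^t_j$ then $\mathtt{G} n_i \in T^t_j$, so in particular $\bigcirc n_i \in T^t_j$ and hence $n_i \in T^{t+1}_j$.

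The Truth Lemma then proceeds by induction on formula complexity. The propositional base and Boolean cases are routine from maximality. For $\bigcirc\alpha$ the equivalence is immediate from the definition of $T^{t+1}_j$; for $\CIRCLE\alpha$ I would use [AT8]–[AT9] to step backward, with [AT10] ($\mathtt{P}\bot$) covering the boundary case $t=0$. The $\mathtt{U}$ and $\mathtt{S}$ cases use the unrolling axioms [AT3] and [AT6] for one direction, and the infinitary rule [RI] for the converse: if $\alpha \mathtt{U} \beta \notin T^t_j$ but a witness existed in $r^j$, then the negation of each $\bigwedge_{l=0}^i \bigcirc^l\alpha \wedge \bigcirc^{i+1}\beta$ would be derivable, and [RI] would derive $\neg(\alpha\mathtt{U}\beta)$, contradicting maximality. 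For $\mathtt{K}_i\alpha$ the definition of $\mathcal{K}_i$ together with [AK1]–[AK5] yields the standard canonical-model equivalence. Finally, the clauses for $n_i \succ n_j$ and $n_j \prec n_i$ reduce, by [AS1]–[AS6], to compound conditions that have already been handled by the previous inductive cases.

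The principal obstacle is the interaction of the infinitary rule [RI] with the knowledge operators. Because [RI] is formulated inside the $k$-nested knowledge implication $\Phi_k$, it cannot simply be applied to a bare negated Until; the Lindenbaum extension built in Theorem \ref{max_ext} must saturate the set against every relevant $\Phi_k$-pattern of witnesses, while at the same time preserving the coherence of $\mathcal{K}_i$ required by [AS5] and [AS6] (so that the compound semantic clauses for $\succ$ and $\prec$, which couple temporal, knowledge, and atomic information, remain faithful in the canonical model). Verifying that this saturation-plus-coherence bookkeeping can be carried out uniformly for every $\langle r^j, t\rangle$ and every formula complexity is where the bulk of the technical work lies.
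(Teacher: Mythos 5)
Your proposal follows essentially the same route as the paper: Lindenbaum extension via Theorem \ref{max_ext}, the canonical structure $\mathbb{M^*}$, a separate check that each $T^t_j$ is maximal consistent, and a Truth Lemma $\alpha \in T^t_j \Leftrightarrow \langle r^j,t\rangle\models\alpha$ proved by induction on complexity, with the infinitary rule [RI] supplying witnesses for $\mathtt{U}$ and the lemma $\mathtt{K}_e T\vdash\mathtt{K}_e\gamma$ handling the knowledge case; you are in fact more explicit than the paper in checking the run restriction via [AT11] and in not forgetting the $\succ$ and $\prec$ clauses. One local slip: in the Until case you have the two directions transposed. The direction that needs [RI] is $\alpha\mathtt{U}\beta\in T^t_j$ implies a \emph{syntactic} witness exists (otherwise maximality puts every $\neg((\bigwedge_{l=0}^i\bigcirc^l\alpha)\wedge\bigcirc^{i+1}\beta)$ into $T^t_j$, and deductive closure under [RI] yields $\neg(\alpha\mathtt{U}\beta)$, contradicting \emph{consistency}); the converse, from a semantic witness to membership, uses the derived rule $\bigcirc^j\beta,\bigcirc^0\alpha,\dots,\bigcirc^{j-1}\alpha\vdash\alpha\mathtt{U}\beta$ obtained from [AT3]. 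As you state it ($\alpha\mathtt{U}\beta\notin T^t_j$ yet a witness exists, whence [RI] derives $\neg(\alpha\mathtt{U}\beta)$, ``contradicting maximality'') the argument does not close, since $\neg(\alpha\mathtt{U}\beta)\in T^t_j$ is exactly what your hypothesis predicts.
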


\begin{theorem}\label{completness1}
	$T \models \psi \leftrightarrow T \vdash \psi$.
\end{theorem}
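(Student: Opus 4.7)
The plan is to obtain this theorem as a routine consequence of the two preceding results: extended soundness (Theorem~\ref{soundness}) and strong completeness (Theorem~\ref{completness}). I would split the biconditional into its two directions and handle each separately, using contraposition for the harder one.

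For the direction $T \vdash \psi \Rightarrow T \models \psi$, I would proceed by induction on the length (possibly transfinite, because of the infinitary rule \textbf{RI}) of a derivation of $\psi$ from $T$. The base cases are that instances of axioms \textbf{A1}, \textbf{AT1}--\textbf{AT11}, \textbf{AK1}--\textbf{AK5}, \textbf{AS1}--\textbf{AS6} are valid in every model at every $\langle r^j,t\rangle$ (this is what Theorem~\ref{soundness} established, axiom by axiom), and that assumptions $\alpha \in T$ are by definition satisfied wherever $T$ is. For the inductive step, \textbf{MP} preserves truth at a point trivially; \textbf{RTN} and \textbf{RKN} are restricted to theorems, so their conclusions are validities and in particular follow from $T$; the delicate case is \textbf{RI}, where I would argue semantically: if for every $i \geqslant 0$ the formula $\Phi_{k}(\neg((\bigwedge_{l=0}^{i}\bigcirc^l \alpha)\wedge \bigcirc^{i+1}\beta),(\theta_j)_{j\in\mathbb{N}})$ holds at a point of a model of $T$, then unpacking the $k$ nested $\mathtt{K}_i$-implications and using the semantic clause for $\mathtt{U}$ shows that $\neg(\alpha\mathtt{U}\beta)$ holds at the innermost point under the same antecedents, giving the conclusion.

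For the direction $T \models \psi \Rightarrow T \vdash \psi$, I would argue by contraposition. Suppose $T \not\vdash \psi$. Then $T \cup \{\neg\psi\}$ must be consistent: if it were not, we would have $T \cup \{\neg\psi\} \vdash \bot$, and from this $T \vdash \psi$ follows by a standard deduction/reductio argument (using \textbf{A1} to get $\neg\psi \rightarrow \bot$ inside a derivation, applying \textbf{MP} with the tautology $(\neg\psi\rightarrow\bot)\rightarrow\psi$, noting that the side conditions on \textbf{RTN} and \textbf{RKN} are respected because they are only applied to theorems, independent of the hypothesis $\neg\psi$). By strong completeness (Theorem~\ref{completness}), the consistent set $T \cup \{\neg\psi\}$ is satisfiable in the canonical structure $\mathbb{M}^\ast$: there is some $\langle r^j,t\rangle$ at which every formula of $T$ holds and $\neg\psi$ holds. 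This point witnesses $T \not\models \psi$, completing the contrapositive.

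The main obstacle I anticipate is the careful handling of the infinitary rule \textbf{RI} in both directions. In the soundness half it forces the induction to run over derivations of ordinal length and requires unwinding the $k$-nested knowledge implication $\Phi_k$ against the semantic clause for $\mathtt{U}$ in a uniform way across all $i$. In the completeness half it is already absorbed into Theorem~\ref{completness}, but I would still need to verify that the deduction-style move from "$T\cup\{\neg\psi\}\vdash\bot$" to "$T\vdash\psi$" does not illicitly apply \textbf{RTN} or \textbf{RKN} to a non-theorem; since those rules are formally restricted to theorems by definition, this amounts to a bookkeeping check on the structure of derivations rather than a genuine difficulty.
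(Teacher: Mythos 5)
Your proposal is correct and matches what the paper intends: the paper states Theorem~\ref{completness1} without an explicit proof, treating it as the standard corollary of soundness (Theorem~\ref{soundness}, extended to derivations from premises) and strong completeness (Theorem~\ref{completness}) via the Deduction theorem and contraposition, which is exactly the argument you spell out. Your attention to the infinitary rule \textbf{RI} and to the restriction of \textbf{RTN}/\textbf{RKN} to theorems supplies precisely the bookkeeping the paper leaves implicit.
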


\begin{theorem}[Decidability theorem]\label{decidabitlity}
	Checking the satisfiability of a given formula $\psi$ is decidable.
	
	\begin{proof}
	In every run at some time instance we will have stationary situation (at least when all possible nodes join the system). Since we do not allow leaving of the nodes, we can apply the ideas from \cite{omegakv, zorano2006, branchtime} to prove the decidability problem.
	\end{proof}
\end{theorem}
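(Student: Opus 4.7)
The plan is to establish a small-model / finite-model property and then decide satisfiability by searching a bounded family of candidate structures. Two features of the framework make this feasible. First, by the restriction $x_{i}^{j,t}=\top \Rightarrow x_{i}^{j,t+1}=\top$ and axiom [AT11], every propositional variable in $\mathbf{N}$ flips at most once from $\bot$ to $\top$; since $|\mathbf{N}|=m$, every run has at most $m$ nontrivial transitions, after which the propositional valuation and the epistemic equivalence classes $\mathcal{K}_i$ stabilize. Second, given a target formula $\psi$, its subformula closure $Cl(\psi)$ is finite.

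The first step is to filtrate the canonical structure $\mathbb{M^*}$ (constructed before Theorem \ref{completness}) through $Cl(\psi)$, identifying $\langle r^j,t\rangle$ and $\langle r^{j'},t'\rangle$ whenever they agree on every formula of $Cl(\psi)$. Since $|Cl(\psi)|$ is bounded and the propositional state space has size at most $2^m$, the quotient has at most $2^{|Cl(\psi)|}$ atoms. The eventual-stationarity of runs forces the temporal transitions in the quotient to collapse into a self-loop on a stationary atom, so the filtrated structure is finite and effectively constructible, and it is satisfiability-equivalent to $\mathbb{M^*}$ for $\psi$.

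Once we have such a finite quotient, we decide $\psi$ by enumerating all candidate quotient structures up to the bounded size and checking for each of them (i) the propositional and monotonicity axioms, (ii) the epistemic structure imposed by the $\mathcal{K}_i$-classes, and (iii) the realisation of all temporal eventualities. Each $\alpha\mathtt{U}\beta\in Cl(\psi)$ is witnessed either in the finite, nonstationary prefix or against the stationary atom where the valuation no longer changes; the past-time eventualities generated by $\mathtt{S}$ are dealt with analogously using [AT6], [AT9], and [AT10], because the past is anchored at time $0$. Adapting the tableau/automaton constructions of \cite{omegakv, zorano2006, branchtime} then yields a terminating decision procedure.

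The hard part will be the infinitary inference rule [RI], which through the $k$-nested implications $\Phi_k$ couples until-formulas with epistemic modalities of arbitrary nesting depth. The delicate point is to bound $k$ in terms of the $\mathtt{K}_i$-nesting depth of $\psi$ and to show that any unsatisfiability of $\psi$ can be witnessed by violation of some $\Phi_k$-instance whose parameters already fit inside the filtrated structure, so that the infinitary rule reduces to a finite check. This is precisely the point at which the machinery of the cited works is needed; the stationary-suffix property of our logic should make their arguments applicable with only minor adaptation.
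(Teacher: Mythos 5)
Your proposal is correct and takes essentially the same route as the paper: the paper's own proof is a two-sentence sketch that appeals to exactly the facts you exploit---runs become stationary because nodes only join and never leave---and defers the remaining machinery to the same three cited works, so your filtration/small-model argument is a reasonable filling-in of what the paper leaves implicit. One small remark: since satisfiability is a purely semantic question (and, via strong completeness, coincides with consistency), the infinitary rule [RI] need not be confronted directly in the decision procedure, so the ``hard part'' you flag is less of an obstacle than you suggest.
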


\section{Proof of the Correctness}\label{proof}

To be able to prove the correctness of the Chord protocol we need to introduce the following definitions:

\begin{definition}[Stable pair]\label{stablepair}
	 The pair of nodes $\langle n_k , n_l \rangle$  is stable (we denote it with $n_k \Cap n_l$) at $\langle r^{j'},t \rangle$  iff
	($n_l \succ^{m_1} n_k) \wedge (\bigwedge_{1 \leqslant j \leqslant m_1} \mathtt{K}_{i_j}(n_{i_j} \succ n_{i_{j+1}})) \wedge (n_l \prec^{m_1} n_k) \wedge (\bigwedge_{1 \leqslant j \leqslant m_1}  \mathtt{K}_{i_{j+1}} (n_{i_{j+1}} \prec n_{i_j}))$, where $n_{i_j}\in \mathbf{N_a}$.
\end{definition}

\begin{definition}[Stable network]\label{stablenet}
	Network is stable (we denote it with $\circledcirc$) at $\langle r^j,t \rangle$ iff $n_k \Cap n_k$ for all $n_k \in \mathbf{N_a}$.
\end{definition}

We introduce an integer constant $f \in \mathbb{N}$, that will represent fairness condition, i.e. it guarantees that a formula will be realized at maximum of $f$ time instances. 

The processes of the Chord network can be describe with:
\begin{itemize}
	\item[] $\rho_{S}$: $\mathtt{H} (\bigwedge_{n_j \in \mathbf{N}} \neg n_j) \wedge n_i \wedge (\bigwedge_{n_j \in \mathbf{N} \backslash \{n_i\}} \neg n_j)\wedge \mathtt{K}_i (n_i \succ n_i) \wedge \mathtt{K}_i (n_i \prec u)$ for one $n_i \in \mathbf{N}$,
	\item[] $\rho_{J,i}$: $\CIRCLE (\neg n_i) \wedge
	n_i \wedge 
	\bigvee_{l=0}^f \bigcirc^l \mathtt{K}_i (n_i \succ n_j) \wedge \mathtt{K}_i (n_i \prec u)$, $n_j \in \mathbf{N_a}, n_i \in \mathbf{N}, i \neq j$,
	\item[] $\rho_{S1,i,j}$: $(\mathtt{K}_i (n_i \succ n_{j}) \wedge \mathtt{K}_j (n_j \prec u)) \vee
	(\mathtt{K}_i (n_i \succ n_{j}) \wedge \mathtt{K}_j (n_j \prec n_k) \wedge n_i\mathtt{M}\langle n_k, n_{j}\rangle)
	\rightarrow \bigvee_{l=0}^f \bigcirc^l \mathtt{K}_j (n_j \prec n_{i})$, $n_i, n_k, n_j \in \mathbf{N_a}$,
	\item[] $\rho_{S2,i,j}$: $\mathtt{K}_i (n_i \succ n_{j}) \wedge \mathtt{K}_j (n_j \prec n_{k}) \wedge n_k\mathtt{M}\langle n_i, n_{j}\rangle \rightarrow $
	$\bigvee_{l=0}^f \bigcirc^l \mathtt{K}_i (n_i \succ n_k)$, $n_i, n_k, n_j \in \mathbf{N_a}$. 
\end{itemize}
[$\rho_{S}$] describes the start of the new Chord network.
[$\rho_{J,i}$] represents the situation when a new node $n_i$ is joining the existing Chord network, while [$\rho_{S1,i,j}$] and [$\rho_{S2,i,j}$] characterize stabilization processes in one Chord network.

To be able to describe periodicity of the stabilization process, we introduce the following axioms:

\begin{itemize}
	\item[] ACF1: $n_i \wedge \rho_{S} \rightarrow \bigvee_{l=0}^f \bigcirc^l \bigvee_{j=0}^{m-1} \rho_{S1,i,j}$, $n_i \in \mathbf{N_a}$,
	\item[] ACF2: $n_i \wedge \rho_{S} \rightarrow \bigvee_{l=0}^f \bigcirc^l  \bigvee_{j=0}^{m-1} \rho_{S2,i,j}$, $n_i \in \mathbf{N_a}$,
	\item[] ACF3: $n_i \wedge \rho_{J,i} \rightarrow \bigvee_{l=0}^f \bigcirc^l \bigvee_{j=0}^{m-1} \rho_{S1,i,j}$, $n_i \in \mathbf{N_a}$,
	\item[] ACF4: $n_i \wedge \rho_{J,i} \rightarrow \bigvee_{l=0}^f \bigcirc^l \bigvee_{j=0}^{m-1} \rho_{S2,i,j}$, $n_i \in \mathbf{N_a}$,
	\item[] ACF5: $n_i \wedge \rho_{S1,i,k} \rightarrow \bigvee_{l=0}^f \bigcirc^l \bigvee_{j=0}^{m-1}  \rho_{S1,i,j}$, $n_i \in \mathbf{N_a}, k\in \{0, m-1\}$,
	\item[] ACF6: $n_i \wedge \rho_{S2,i,k} \rightarrow \bigvee_{l=0}^f \bigcirc^l \bigvee_{j=0}^{m-1} \rho_{S2,i,j}$, $n_i \in \mathbf{N_a}, k\in \{0, m-1\}$.
\end{itemize}

The correctness of the Chord protocol can be proved by the following Lemmas and Theorem.

\begin{lemma}\label{start}
	Let a new node start a new Chord network. Then, there is a finite period of time after the network will be stable again, if no other nodes are trying to join in the meanwhile.
\end{lemma}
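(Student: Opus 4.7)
The plan is to unfold the definitions of stable network and stable pair for the degenerate case of a single active node, and then trace through the fairness and stabilization axioms to derive the required knowledge formulas within a bounded number of time steps. The hypothesis gives $\rho_{S}$ at some $\langle r^j, t_0\rangle$, which in particular fixes $n_i$ as the unique active node (all other $n_k$ are $\bot$ historically by the $\mathtt{H}$-conjunct), together with $\mathtt{K}_i(n_i \succ n_i)$ and $\mathtt{K}_i(n_i \prec u)$. Since no other node joins in the meanwhile, $\mathbf{N_a} = \{n_i\}$ throughout the interval of interest. By Definitions \ref{stablepair} and \ref{stablenet}, network stability reduces to the single requirement $n_i \Cap n_i$, which, for chain length $m_1 = 1$, amounts to $\mathtt{K}_i(n_i \succ n_i) \wedge \mathtt{K}_i(n_i \prec n_i)$. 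The successor half is already in hand, so the whole lemma reduces to producing $\bigvee_{l=0}^{2f} \bigcirc^l \mathtt{K}_i(n_i \prec n_i)$ inside the finite horizon.

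Next I would invoke axiom ACF1 on $\rho_{S}$ to obtain $\bigvee_{l=0}^f \bigcirc^l \bigvee_{j=0}^{m-1} \rho_{S1,i,j}$. Because the only active node is $n_i$, the only instance whose nontrivial conjuncts can hold is $\rho_{S1,i,i}$: for any $j \neq i$, $n_j$ is inactive and, by [AK1]/[AK3], neither $\mathtt{K}_j(n_j \prec u)$ nor $\mathtt{K}_j(n_j \prec n_k)$ can be derived. The antecedent of $\rho_{S1,i,i}$, namely $\mathtt{K}_i(n_i \succ n_i) \wedge \mathtt{K}_i(n_i \prec u)$, is exactly what $\rho_{S}$ supplies. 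Its consequent then delivers $\bigvee_{l=0}^f \bigcirc^l \mathtt{K}_i(n_i \prec n_i)$, so the target knowledge formula holds within a further $f$ steps, for a total of at most $2f$ time instances.

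The main obstacle is the persistence of $\mathtt{K}_i(n_i \succ n_i)$ across the intermediate steps before the stabilization rule fires. For this I would appeal to axiom AS6 together with [AS5]: since $\mathbf{N_a} = \{n_i\}$ throughout the interval, there is no new node $n_k$ with $n_k \mathtt{M}\langle n_i, n_i\rangle$ that $n_i$ comes to learn in the next moment, so the successor arrow (and with it the associated knowledge) is propagated step-by-step. A secondary subtlety is the reading of the fairness axioms ACF$k$: I would interpret them as asserting that the indicated stabilization process is actually executed within $f$ steps (so that its antecedent is satisfied at firing time and its consequent is produced), which is the standard fairness reading for such schedule-based schemata in concurrent-protocol specifications. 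Combining these ingredients, after at most $2f$ time steps we have simultaneously $\mathtt{K}_i(n_i \succ n_i)$ and $\mathtt{K}_i(n_i \prec n_i)$, hence $n_i \Cap n_i$, hence $\circledcirc$, as required.
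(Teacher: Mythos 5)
Your proposal is correct and follows essentially the same route the paper intends: the paper gives no standalone proof of this lemma but states it is proved like Lemma~\ref{join11n}, and your argument is exactly the single-node specialization of that derivation (unfold $\circledcirc$ to $n_i \Cap n_i$, fire ACF1 to obtain $\rho_{S1,i,i}$, use its consequent to get $\mathtt{K}_i(n_i \prec n_i)$ within $f$ further steps, and use AS6 for persistence of the successor knowledge). The simplifications you exploit, including narrowing the disjunction $\bigvee_j \rho_{S1,i,j}$ to the one useful instance, mirror the corresponding steps in the paper's own derivation for Lemma~\ref{join11n}.
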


\begin{lemma}\label{join11}
	Let a new node join a stable Chord network which consists of only one node. Then, there is a
	finite period of time after the network will be stable again, if no other nodes are trying to join in the meanwhile.
\end{lemma}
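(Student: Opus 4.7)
My plan is to trace, step-by-step, the chain of stabilization events that the axioms ACF3--ACF6 force after the join, and to verify that the premise of the next scheduled $\rho$-process is satisfied at each stage. Let $n_k$ be the unique active node of the stable single-node network, so that $\mathtt{K}_k(n_k \succ n_k) \wedge \mathtt{K}_k(n_k \prec n_k)$ holds, and let $n_i$ be the joining node. By $\rho_{J,i}$ applied with bootstrap contact $n_k$, within finitely many time instances we obtain $\mathtt{K}_i(n_i \succ n_k) \wedge \mathtt{K}_i(n_i \prec u)$. Since the hypothesis forbids any further join, axiom AS6 lets me carry $\mathtt{K}_i(n_i \succ n_k)$ forward through all subsequent instants.

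From here I would chain three stabilization events. First, ACF3 forces an instance of $\rho_{S1,i,j}$ to be in effect; the only candidate whose premise is satisfiable is $j=k$ (by AS1 and AK3, $n_i$ cannot have a successor other than $n_k$), and for $\rho_{S1,i,k}$ the second disjunct $\mathtt{K}_i(n_i \succ n_k) \wedge \mathtt{K}_k(n_k \prec n_k) \wedge n_i \mathtt{M}\langle n_k, n_k\rangle$ is immediately true (the last conjunct is vacuous by the $j=k$ clause in the definition of $\mathtt{M}$), so within finitely many steps $\mathtt{K}_k(n_k \prec n_i)$ is established, and by AS2 this supplants $\mathtt{K}_k(n_k \prec n_k)$. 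Second, ACF6, seeded by the $\rho_{S2,k,\cdot}$ that ACF2 introduced during the original $\rho_S$, keeps firing instances of $\rho_{S2,k,j}$; in particular $\rho_{S2,k,k}$ now has its premise $\mathtt{K}_k(n_k \succ n_k) \wedge \mathtt{K}_k(n_k \prec n_i) \wedge n_i \mathtt{M}\langle n_k, n_k\rangle$ satisfied, yielding $\mathtt{K}_k(n_k \succ n_i)$ within finitely many further steps, which by AS1 supplants $\mathtt{K}_k(n_k \succ n_k)$. Third, ACF5 iterates and $\rho_{S1,k,i}$ now has the first-disjunct premise $\mathtt{K}_k(n_k \succ n_i) \wedge \mathtt{K}_i(n_i \prec u)$ satisfied, so within finitely many more steps we obtain $\mathtt{K}_i(n_i \prec n_k)$.

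At this point the knowledge base contains exactly $\mathtt{K}_i(n_i \succ n_k)$, $\mathtt{K}_i(n_i \prec n_k)$, $\mathtt{K}_k(n_k \succ n_i)$, $\mathtt{K}_k(n_k \prec n_i)$, which witnesses the two cyclic chains $n_i \succ n_k \succ n_i$ and $n_i \prec n_k \prec n_i$. By Definition \ref{stablepair} (with $m_1 = 2$) we obtain $n_i \Cap n_i$ and, symmetrically, $n_k \Cap n_k$, so by Definition \ref{stablenet} the network is stable. Each of the four waits (one join plus three stabilization events) is bounded by the fairness constant $f$, so the whole process completes in a finite number of time instances.

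The main technical hurdle, as I see it, is not the four implications themselves but the invariance book-keeping between them: I must rule out that some scheduled $\rho$-event in the interim overwrites $\mathtt{K}_i(n_i \succ n_k)$, or $\mathtt{K}_k(n_k \succ n_i)$ once it is set, or $\mathtt{K}_i(n_i \prec u)$ before it is consumed in the third step. For the successor pointers this is precisely the content of axiom AS6 combined with the hypothesis that no other node tries to join in the meanwhile; for the predecessor pointers I would argue analogously via AS2 and AS3. The cleanest formalization is likely an induction on the number of intervening stabilization events, showing that each irrelevant event is a no-op because either its premise is false or its conclusion merely repeats existing knowledge.
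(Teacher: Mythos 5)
Your proposal is correct and follows essentially the same route as the paper, which proves this lemma by the same four-stage chain used for Lemma \ref{join11n}: the join event $\rho_{J,i}$ establishes $\mathtt{K}_i(n_i \succ n_k)$, then $\rho_{S1}$, $\rho_{S2}$, $\rho_{S1}$ (each triggered by ACF3/ACF4 and bounded by $f$) successively establish $\mathtt{K}_k(n_k \prec n_i)$, $\mathtt{K}_k(n_k \succ n_i)$ and $\mathtt{K}_i(n_i \prec n_k)$, with AS6 carrying each pointer forward between stages. Your specialization to the degenerate single-node case (using the $j=k$ clause of $\mathtt{M}$ and AS1/AS2 to supplant the self-pointers) is exactly the intended adaptation.
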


Proofs of Lemmas \ref{start} and \ref{join11} are similar like the proof of Lemma \ref{join11n}. 

\begin{restatable}{lemma}{join}\label{join11n}
	Let a peer join a Chord network, between two nodes which constitute a stable pair, such that the second node is the successor of the first node. Then, there is a
	finite period of time after the starting pair will be stable again, if no other nodes are trying to join in the meanwhile.
\end{restatable}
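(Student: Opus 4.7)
The plan is to proceed by tracing the four ``intended'' stabilization events in order and bounding each of them by the fairness constant $f$, so that after at most $4f$ time instances the pair of interest is stable again. Let $\langle n_a, n_b\rangle$ be the starting stable pair with $\mathtt{K}_a(n_a \succ n_b)$ and $\mathtt{K}_b(n_b \prec n_a)$, and let $n_i$ be the joining peer whose identifier satisfies $n_i\,\mathtt{M}\,\langle n_a, n_b \rangle$. By Definition \ref{stablepair} the hypothesis also gives $\mathtt{K}_b(n_b\succ n_a)$ and $\mathtt{K}_a(n_a\prec n_b)$, and by AT11 every active node remains active, so none of these knowledge atoms is invalidated by a node disappearing.

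First, I would invoke $\rho_{J,i}$: since $n_i$ becomes active with $\CIRCLE(\neg n_i)\wedge n_i$, there exists $l_1\le f$ with $\bigcirc^{l_1}\mathtt{K}_i(n_i\succ n_b)$ (the correct successor is $n_b$ because $n_i\,\mathtt{M}\,\langle n_a,n_b\rangle$), together with $\mathtt{K}_i(n_i\prec u)$. Next, I apply $\rho_{S1,i,b}$: its antecedent $\mathtt{K}_i(n_i\succ n_b)\wedge \mathtt{K}_b(n_b\prec n_a)\wedge n_i\,\mathtt{M}\,\langle n_a,n_b\rangle$ now holds, and ACF3/ACF5 guarantee that such an instance fires within $f$ steps, so for some $l_2\le f$ we obtain $\bigcirc^{l_1+l_2}\mathtt{K}_b(n_b\prec n_i)$. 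Then $\rho_{S2,a,b}$ applies with $k=i$, since $\mathtt{K}_a(n_a\succ n_b)\wedge \mathtt{K}_b(n_b\prec n_i)\wedge n_i\,\mathtt{M}\,\langle n_a,n_b\rangle$: by ACF2/ACF6 some $l_3\le f$ produces $\bigcirc^{l_1+l_2+l_3}\mathtt{K}_a(n_a\succ n_i)$. Finally $\rho_{S1,a,i}$ fires (its antecedent holds from $\mathtt{K}_a(n_a\succ n_i)\wedge \mathtt{K}_i(n_i\prec u)$ via ACF1/ACF5) and within $l_4\le f$ we obtain $\bigcirc^{l_1+l_2+l_3+l_4}\mathtt{K}_i(n_i\prec n_a)$. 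By AS4 and AS5 the derived successor assertions are upgraded to the symmetric $\succ^{1}$/$\prec^{1}$ knowledge required by Definition \ref{stablepair}, witnessing $n_a\Cap n_i$ (and symmetrically $n_i\Cap n_b$) at time $t+l_1+l_2+l_3+l_4\le t+4f$.

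The delicate part, and the main obstacle, is to show that the knowledge atoms we have already acquired are still available when each subsequent $\rho$-rule fires, i.e.\ that the intermediate time instances do not silently destroy the premises of the next rule. This is exactly what axiom AS6 buys: as long as no node learns about a new peer in the ring interval between it and its recorded successor, the successor relation persists under $\bigcirc$. Since the hypothesis excludes concurrent joins and $n_i$ is the only new node with $n_i\,\mathtt{M}\,\langle n_a,n_b\rangle$, AS6 propagates $\mathtt{K}_a(n_a\succ n_b)$ through steps $l_1+l_2$ (until $n_a$ is itself the one to update), and the analogous argument based on AS2/AS3 together with the uniqueness axioms keeps $\mathtt{K}_b(n_b\prec n_a)$ alive up to step $l_1$ and $\mathtt{K}_i(n_i\succ n_b)$, $\mathtt{K}_i(n_i\prec u)$ alive up to steps $l_1+l_2$ and $l_1+l_2+l_3$ respectively.

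To close the argument I would package these persistence claims as a short preparatory sub-lemma (``if no joining event concerns the interval $\langle n_a,n_b\rangle$ during $[t,t']$ then the predicates $\mathtt{K}_a(n_a\succ n_b)$, $\mathtt{K}_b(n_b\prec n_a)$ are preserved''), proved by induction on $t'-t$ using AS6, AS1--AS3 and AT11. With this sub-lemma in hand the four-step chain above is purely mechanical, and the total delay $l_1+l_2+l_3+l_4\le 4f$ yields the finite-time statement of the lemma. The correctness of Lemmas \ref{start} and \ref{join11} then follows from degenerate cases of the same argument (respectively with no applications of $\rho_{S1}/\rho_{S2}$, and with $n_a=n_b$).
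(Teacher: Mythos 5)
Your proposal is correct and follows essentially the same route as the paper: the identical four-event chain (the new node acquires its successor via $\rho_{J}$, then $\rho_{S1}$ updates the successor's predecessor, $\rho_{S2}$ updates the old predecessor's successor, and a final $\rho_{S1}$ gives the new node its predecessor), with persistence of the already-established knowledge atoms secured by AS6 and the scheduling by the ACF axioms. The only differences are cosmetic: the paper inlines the AS6 persistence steps rather than isolating them in a sub-lemma, and it arrives at the bound $5f$ rather than your $4f$ because it counts separately the delay until a stabilization instance fires (ACF) and the delay in its consequent's disjunction $\bigvee_{l=0}^f \bigcirc^l$ — a distinction that does not affect the finiteness claim.
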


\begin{lemma}\label{join1n}
	Let a peer join a Chord network, between two nodes which constitute a stable pair. Then, there is a
	finite period of time after the starting pair will be stable again, if no other nodes are trying to join in the meanwhile.
	\begin{proof}
		Since one new node is joining the stable pair, we can choose two nodes which are each others successor and predecessor and the new node is joining between them, so we can apply Lemma \ref{join11n}.
	\end{proof}
\end{lemma}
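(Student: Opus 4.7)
The plan is to reduce the claim to Lemma~\ref{join11n}. Let $\langle n_k,n_l\rangle$ be the stable pair with $n_l\succ^{m_1} n_k$; unfolding Definition~\ref{stablepair} produces a chain of active nodes $n_k=n_{i_0},n_{i_1},\ldots,n_{i_{m_1}}=n_l$ in $\mathbf{N_a}$ together with the mutual-knowledge witnesses $\mathtt{K}_{i_j}(n_{i_j}\succ n_{i_{j+1}})$ and $\mathtt{K}_{i_{j+1}}(n_{i_{j+1}}\prec n_{i_j})$ for every $0\leqslant j<m_1$. Every consecutive pair $(n_{i_j},n_{i_{j+1}})$ therefore already satisfies the hypothesis of Lemma~\ref{join11n}: it is a stable pair in which the second node is the immediate successor of the first.

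When the new peer $n_p$ joins between $n_k$ and $n_l$, the semantics of the ring-interval operator $\mathtt{M}$ forces $n_p$ into exactly one interval $(n_{i_j},n_{i_{j+1}}]$. First I would invoke Lemma~\ref{join11n} on this single adjacent pair to obtain, after finitely many time instances, $n_{i_j}\succ n_p$, $n_p\succ n_{i_{j+1}}$, $n_p\prec n_{i_j}$, and $n_{i_{j+1}}\prec n_p$, each accompanied by the corresponding $\mathtt{K}$-witness. Second, I would show that the remainder of the chain is undisturbed: since no other peer joins in the meanwhile, no new active node enters any interval $(n_{i_{j'}},n_{i_{j'+1}}]$ with $j'\neq j$, so axiom [AS6] propagates each successor assertion $n_{i_{j'}}\succ n_{i_{j'+1}}$ forward in time, and [AS4] together with a symmetric application of [AS6] carries the predecessor side. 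Concatenating the locally restored two-step chain through $n_p$ with the untouched remainder yields $n_l\succ^{m_1+1} n_k$ and $n_l\prec^{m_1+1} n_k$, so by Definition~\ref{stablepair} the pair $\langle n_k,n_l\rangle$ is stable again.

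The main obstacle I anticipate is this second step: verifying that the knowledge updates triggered by the local stabilization do not damage the $\mathtt{K}$-witnesses elsewhere in the chain. This is essentially a locality property whose justification rests on the \emph{no other joins in the meanwhile} hypothesis together with [AS6] and the self-knowledge axiom [AK1], and it is precisely what the authors collapse into the single-sentence proof sketched in the excerpt. Once this bookkeeping is in place, the statement follows as a direct corollary of Lemma~\ref{join11n}.
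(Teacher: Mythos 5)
Your proposal is correct and takes essentially the same route as the paper's own proof: identify the adjacent successor--predecessor pair of the chain into which the new peer falls and reduce to Lemma~\ref{join11n}. The only difference is that you explicitly carry out the locality bookkeeping (via [AS6]) showing the rest of the chain is undisturbed, which the paper's one-sentence argument leaves implicit.
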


\begin{lemma}\label{joinnn}
	Let a Chord network contain a stable pair. If a sequence of nodes join between the nodes that constitute this stable pair, then   there is a finite period of time after the starting pair will be stable again.
	\begin{proof}
	If we assume that all nodes that want to join the network have different successors, by Lemma \ref{join1n} the statement holds.
	
	If this is not the case, we can assume that $n_i \Cap n_k$ and that set of nodes $j_1, j_2, \ldots$, such that $i \leqslant \ldots \leqslant j_2 \leqslant j_1 \leqslant k$, are joining this stable pair. Then, we can apply Lemma \ref{join1n} on the tuples $\langle i, j_1, k\rangle$, $\langle i, j_2, j_1\rangle$, \ldots. This process will have as a result $n_i \Cap n_k$, again.
	\end{proof}
\end{lemma}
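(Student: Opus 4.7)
The plan is to reduce the multi-join situation to an iterated application of Lemma \ref{join1n}, exactly as in the two preceding reduction lemmas. Let $\langle n_i, n_k \rangle$ denote the given stable pair and let $j_1, j_2, \ldots, j_s$ index the peers that attempt to join inside the ring interval determined by this pair, enumerated with respect to the $\mathtt{M}$-induced order so that along the arc from $n_i$ to $n_k$ we have $i < j_s < \cdots < j_2 < j_1 < k$. Thus $n_{j_1}$ is the joining node whose identifier is closest to $n_k$.

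First I would apply Lemma \ref{join1n} to the tuple $\langle n_i, n_{j_1}, n_k \rangle$, that is, to the stable pair $\langle n_i, n_k \rangle$ with the single peer $n_{j_1}$ joining in between. This yields a finite number of time instants after which $\langle n_i, n_k \rangle$ is stable again, with the chain witnessing stability now passing through $n_{j_1}$; in particular $\langle n_i, n_{j_1} \rangle$ is itself a stable pair at that instant. Next I would apply Lemma \ref{join1n} to $\langle n_i, n_{j_2}, n_{j_1} \rangle$, and then inductively to $\langle n_i, n_{j_{r+1}}, n_{j_r} \rangle$ for $r = 2, 3, \ldots, s-1$. Since every application terminates within a finite delay bounded by the fairness constant $f$ and the number of stabilization sub-steps, the total time is bounded by a finite multiple of $s f$. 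At termination the successor chain $n_i \succ n_{j_s} \succ \cdots \succ n_{j_1} \succ n_k$ and its reverse predecessor chain are in place, so $n_i \Cap n_k$ holds again in the sense of Definition \ref{stablepair}.

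The main obstacle is that the hypothesis of Lemma \ref{join1n} requires no other peer to be joining in the meanwhile, whereas here several peers are joining concurrently. The ordering from $n_k$ inward is what rescues this: after processing $n_{j_r}$, the pair $\langle n_i, n_{j_r} \rangle$ contains among the not-yet-integrated peers only $n_{j_{r+1}}, \ldots, n_{j_s}$, and the next invocation targets exactly one of them, namely $n_{j_{r+1}}$, so the lemma's single-join hypothesis is satisfied in the reduced interval. Persistence of activation (axiom AT11) together with AS5 and AS6 guarantees that the already-established portion of the chain is not disturbed by the later joins, and the fairness axioms ACF3--ACF6 ensure that the successive stabilization rounds actually fire within the $f$-bounded windows used in each application of Lemma \ref{join1n}. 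This justifies the inductive step and closes the argument.
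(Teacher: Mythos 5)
Your proposal follows essentially the same route as the paper's own proof: you reduce to Lemma \ref{join1n} applied successively to the tuples $\langle n_i, n_{j_1}, n_k\rangle$, $\langle n_i, n_{j_2}, n_{j_1}\rangle$, $\ldots$, ordered from the end of the interval inward, which is exactly the paper's decomposition. Your additional remarks on why the single-join hypothesis of Lemma \ref{join1n} is recovered at each stage, and on the $O(sf)$ time bound, only make explicit what the paper leaves implicit.
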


\begin{theorem}\label{ultimatethm}
	$\vdash \neg \circledcirc \rightarrow \mathtt{F} \circledcirc$
	\begin{proof}
		The unstable state can be reached only by joining of the new nodes, and, since we do not allow node failures, this theorem is the corollary of Lemmas \ref{start} -- \ref{joinnn}.
	\end{proof}	
\end{theorem}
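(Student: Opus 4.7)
The plan is to reduce any unstable configuration to a finite collection of incomplete joins, and then invoke the preceding lemmas to resolve each one in bounded time. First, I would unpack $\neg\circledcirc$ via Definition \ref{stablenet}: if $\langle r^j, t\rangle \models \neg\circledcirc$, then some $n_k \in \mathbf{N_a}$ satisfies $\neg(n_k \Cap n_k)$, meaning the successor/predecessor cycle through $n_k$ is broken. Using axiom AT11 (an active node remains active) together with the working assumption that no node leaves or fails, the only mechanism that can introduce such a break is an occurrence of the join process $\rho_{J,i}$; axioms AS1--AS6 preclude any spontaneous corruption of successor or predecessor pointers between consecutive time instants.

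Next, I would exploit the finiteness of $\mathbf{N}$: since $|\mathbf{N}| = m$ and AT11 prevents an active node from becoming inactive (let alone rejoining), at most $m$ joins happen along any run. Hence along the run $r^j$ there is a finite time $t^* \geqslant t$ after which no further join occurs. From $t^*$ onward, only the stabilization processes $\rho_{S1,i,j}$ and $\rho_{S2,i,j}$ remain active, and the fairness axioms ACF1--ACF6 force each of these to fire within at most $f$ ticks. The open instabilities present at $t^*$ can be grouped by the stable pair they interrupt; for each such interrupted pair, Lemma \ref{joinnn} provides a finite stabilization time, since that lemma already subsumes the scenarios of Lemmas \ref{start}, \ref{join11}, \ref{join11n}, and \ref{join1n}.

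Taking the maximum over the (finitely many) interrupted pairs yields a single $t^{**} > t^*$ at which $n_k \Cap n_k$ holds for every $n_k \in \mathbf{N_a}$, i.e.\ $\circledcirc$ holds, so $\langle r^j, t\rangle \models \mathtt{F}\circledcirc$. Since the initial time instant was arbitrary, this gives $\models \neg\circledcirc \rightarrow \mathtt{F}\circledcirc$, and Theorem \ref{completness1} converts the semantic statement to the desired syntactic one. The main obstacle I expect is the very first step: rigorously ruling out the possibility that stabilization rounds, executed in an unlucky order or overlapping with joins, could themselves manufacture new local inconsistencies. This will hinge on AS6, which specifies exactly when a currently-recorded successor remains the successor at the next time instant, together with the observation that stabilization steps only refine a node's knowledge about already-active nodes rather than inventing new successor or predecessor relations outside of what AS1--AS4 permit.
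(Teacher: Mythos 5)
Your proposal is correct and follows essentially the same route as the paper, whose own proof is a single sentence observing that instability arises only from joins (no failures being allowed) and invoking Lemmas \ref{start}--\ref{joinnn}; you have simply filled in the bookkeeping the paper leaves implicit (finitely many joins by AT11 and $|\mathbf{N}|=m$, grouping broken links by interrupted stable pair, taking the maximum of finitely many stabilization times). The only cosmetic difference is your final detour through semantic validity and Theorem \ref{completness1}, whereas the paper treats the lemmas as directly yielding the syntactic claim.
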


\section{Conclusion}\label{conclusion}

The core part of the every IoT system are its discovery and control services. In the distributed environment, these services can be realized using the Chord protocol.

In this paper we provide the axiomatization and prove the soundness, strong completeness and decidability of the logic of time and knowledge. Using this framework,  we prove the correctness of the maintenance of the ring topology of the Chord protocol with the respect of the fact that nodes are not allowed to departure the system after they join it.

Our plan is to continue our research to prove the correctness in the general case.
Also, one of the possible directions for further work is to apply the
similar technique to describe other DHT protocols and other cloud processes.

Another challenge could be to verify the given proof in one
of the formal proof assistants (e.g., Coq, Isabelle/HOL). It might
also produce a certified program implementation from the proof of
correctness. 

\appendix 
\section{Proofs}
\label{appendix}


\soundness*	
	\begin{proof}

		\begin{itemize}
			\item[AT1] $\neg \bigcirc \alpha \leftrightarrow \bigcirc \neg \alpha$
			
			$\langle r^j, t\rangle \models \neg \bigcirc \alpha$ iff
			$\langle r^j, t\rangle \not\models \bigcirc \alpha$ iff
			
			$\langle r^j, t+1\rangle \not\models \alpha$ iff
			$\langle r^j, t+1\rangle \models \neg \alpha$ iff
			
			$\langle r^j, t\rangle \models \bigcirc \neg \alpha$
			
			\item[AT2] $\bigcirc (\alpha \rightarrow \beta) \rightarrow (\bigcirc \alpha \rightarrow \bigcirc \beta)$ 
			
			$\langle r^j, t\rangle \models \bigcirc (\alpha \rightarrow \beta)$ iff
			$\langle r^j, t+1\rangle \models \alpha \rightarrow \beta$ iff
			$\langle r^j, t+1\rangle \models \neg \alpha \vee \beta$ iff
			
			$\langle r^j, t+1\rangle \models \neg \alpha $ or $\langle r, t+1\rangle \models \beta$ iff
			
			$\langle r^j, t\rangle \models \neg \bigcirc \alpha $ or $\langle r, t\rangle \models \bigcirc \beta$ if
			
			$\langle r^j, t\rangle \models \neg \bigcirc \alpha \vee \bigcirc \beta$ iff
			
			$\langle r^j, t\rangle \models \bigcirc \alpha \rightarrow \bigcirc \beta$
			
			\item[AT3] $\alpha \mathtt{U} \beta \leftrightarrow \beta \vee (\alpha \wedge \bigcirc (\alpha \mathtt{U} \beta))$
			
			$\langle r^j, t\rangle \models \alpha \mathtt{U} \beta$ iff
			
			$\langle r^j, t+i\rangle \models \beta, i \geqslant 0$ and $\langle r^j, t+k\rangle \models \alpha$ and $0\leqslant k<i$ iff
			
			$\langle r^j, t\rangle \models \beta$ or
			$\langle r^j, t+i\rangle \models \beta, i > 0$ and $\langle r^j, t+k\rangle \models \alpha$ and $0\leqslant k<i$ iff
			
			$\langle r^j, t\rangle \models \beta$ or
			$\langle r^j, t\rangle \models \alpha$ and $\langle r, t+i\rangle \models \beta$ and $\langle r^j, t+k\rangle \models \alpha$ and $1\leqslant k<i$ iff
			
			$\langle r^j, t\rangle \models \beta$ or
			$\langle r^j, t\rangle \models \alpha$ and $\langle r, t+i-1\rangle \models \beta$ and $\langle r, t+k-1\rangle \models \alpha$ and $0\leqslant k-1<i-1$ iff
			
			$\langle r^j, t\rangle \models \beta$ or					 
			$\langle r^j, t\rangle \models \alpha$ and $\langle r^j, t\rangle \models \bigcirc (\alpha \mathtt{U} \beta)$ iff
			
			$\langle r^j, t\rangle \models \beta \vee (\alpha \wedge \bigcirc (\alpha \mathtt{U} \beta))$
			
			\item[AT4] $\alpha \mathtt{U} \beta \rightarrow \mathtt{F} \beta$
			
			$\langle r^j, t\rangle \models \alpha \mathtt{U} \beta$ iff
			
			$\langle r^j, t+i\rangle \models \beta, i \geqslant 0$ and $\langle r, t+k\rangle \models \alpha$ and $0\leqslant k<i$ if
			
			$\langle r^j, t+i\rangle \models \beta$ and exists $i \geqslant 0$ iff
			
			$\langle r^j, t\rangle \models \mathtt{F} \beta$
			
		\end{itemize}
		
		[AT5 -- AT7] similarly like [AT2 -- AT4] regarding the sub-case $\langle r^j, 0\rangle$.
		
		\begin{itemize}
			\item[AT8] $\alpha \leftrightarrow \bigcirc \CIRCLE \alpha$
			
			$\langle r^j, t\rangle \models \alpha $ iff
			$\langle r^j, t+1\rangle \models \CIRCLE \alpha $ iff 
			$\langle r^j, t\rangle \models \bigcirc \CIRCLE \alpha$
		\end{itemize}
		
		[AT9] similarly like [AT8], regarding the sub-case $\langle r^j, 0\rangle$.
		
		\begin{itemize}
			\item[AT10] $\mathtt{P} \bot$
			
			$\langle r^j, t\rangle \models \mathtt{P}\bot $ iff
			$\langle r^j, 0\rangle \models \CIRCLE \bot $ 
			
			\item[AT11] $n_i \rightarrow \mathtt{G} n_i$
			
			Because of the restriction if $x_{i}^{j,t}=\top$ then $x_{i}^{j,t+1}=\top$.
		\end{itemize}
		
		\begin{itemize}
			\item[AK1] $\varphi \leftrightarrow \mathtt{K}_i \varphi$, $\varphi = \pm n_{i}$
			
			$\langle r^j, t\rangle \models \varphi$ iff
			
			$\langle r^{j'}, t'\rangle \models \varphi$ and all $\langle r^{j'}, t'\rangle \in \mathcal{K}_i(\langle r^j, t\rangle)$ iff

			$\langle r^j, t\rangle \models \mathtt{K}_i \varphi$

			\item[AK2] $(\mathtt{K}_i \alpha \wedge \mathtt{K}_i (\alpha \rightarrow \beta)) \rightarrow \mathtt{K}_i \beta$
			
			$\langle r^j, t\rangle \models \mathtt{K}_i \alpha \wedge \mathtt{K}_i (\alpha \rightarrow \beta)$ iff
			
			$\langle r^j, t\rangle \models \mathtt{K}_i \alpha$ and $\langle r^j, t\rangle \models \mathtt{K}_i (\alpha \rightarrow \beta)$ iff
			
			$\langle r^{j'}, t'\rangle \models \alpha$ for all $\langle r^{j'}, t'\rangle \in \mathcal{K}_i(\langle r^j, t\rangle)$ and $\langle r', t'\rangle \models \alpha \rightarrow \beta$  for all $\langle r^{j'}, t'\rangle \in \mathcal{K}_i(\langle r^j, t\rangle)$ iff
			
			$\langle r^{j'}, t'\rangle \models \alpha \wedge (\alpha \rightarrow \beta)$  for all $\langle r^{j'}, t'\rangle \in \mathcal{K}_i(\langle r^j, t\rangle)$ if
			
			$\langle r^{j'}, t'\rangle \models \beta$  for all $\langle r^{j'}, t'\rangle \in \mathcal{K}_i(\langle r^j, t\rangle)$ iff
			
			$\langle r^j, t\rangle \models \mathtt{K}_i\beta$
			
			\item[AK3] $\mathtt{K}_i \alpha \rightarrow \alpha$ 
			
			$\langle r^j, t\rangle \models \mathtt{K}_i \alpha$ iff
			
			$\langle r^{j'}, t'\rangle \models \alpha$ and $\langle r^{j'}, t'\rangle \in \mathcal{K}_i(\langle r^j, t\rangle)$ then
			
			$\langle r^j, t\rangle \models \alpha$
			
			\item[AK4] $\mathtt{K}_i \alpha \rightarrow \mathtt{K}_i \mathtt{K}_i \alpha$
			
			$\langle r^j, t\rangle \models \mathtt{K}_i \alpha$ iff
			
			$\langle r^{j'}, t'\rangle \models \alpha$ and $\langle r^{j'}, t'\rangle \in \mathcal{K}_i(\langle r^j, t\rangle)$ then
			
			$\langle r^{j''}, t''\rangle \models \alpha$ and $\langle r^{j''}, t''\rangle \in \mathcal{K}_i(\langle r^{j'}, t'\rangle)$ iff 
			
			$\langle r^{j'}, t'\rangle \models \mathtt{K}_i \alpha$ and $\langle r^{j'}, t'\rangle \in \mathcal{K}_i(\langle r^j, t\rangle)$ iff
			
			$\langle r^j, t\rangle \models \mathtt{K}_i \mathtt{K}_i \alpha$
			
			\item[AK5] $\neg \mathtt{K}_i \alpha \rightarrow \mathtt{K}_i \neg \mathtt{K}_i \alpha$ 
			
			Similarly like [AK4]

			
			
			
			
		\end{itemize}
		
		
		\begin{itemize}
			\item[AS1] $n_i \succ n_j \rightarrow \bigwedge_{n_k \in \mathbf{N_1} \backslash \{n_j\}}\neg (n_i  \succ n_k), n_i, n_j \in \mathbf{N}$ 
			
			Let $\langle r^l, t\rangle \models n_i \succ n_j$. 
			
			If $i=j$ , by the definition of $n_i \succ n_i$ we have that 
			
			\begin{center} 			
				$\langle r^l, t \rangle \models n_i \wedge \mathtt{K}_i (\bigwedge_{n_j \in \mathbf{N}\backslash \{n_i\}} \neg n_j),$
			\end{center} 
			
			and by [AK3] and [A1] we have that  
			
			\begin{center}
				$\langle r^l, t \rangle \models \bigwedge_{n_j \in \mathbf{N}\backslash \{n_i\}} \neg n_j$ 	
			\end{center}
			
			so, there is no candidate $n_k \in \mathbf{N_1}  \backslash \{n_i\}$ such that $n_i  \succ n_k$.	
			
			Let $i \neq j$ and let $$\langle r^l, t\rangle \models n_i \succ n_{j'}$$ and $i < j' < j \leqslant m$. Then, by the definition of $\succ$ relation we have
			
			$$n_i \succ n_{j'} \rightarrow \mathtt{K}_i n_{j'},$$
			
			and
			
			$$n_i \succ n_{j} \rightarrow \mathtt{K}_i (\bigwedge_{k=i+1}^{j-1} \neg n_k).$$
			
			Last two facts are in contradiction.
			
			Similarly in all other cases.
			
			\item[AS2] $n_i \prec n_j \rightarrow \bigwedge_{n_k \in \mathbf{N_1} \backslash \{n_j\}}\neg (n_i  \prec n_k), n_i, n_j \in \mathbf{N}$
			
			Similarly like [AS1].
			
			\item[AS3] $n_i \prec n_j \rightarrow \bigwedge_{n_k \in \mathbf{N_1} \backslash \{n_i\}}\neg (n_k  \prec n_j), n_i, n_j \in \mathbf{N}$
			
			Similarly like [AS1].

			\item[AS4] $n_i \prec n_j \rightarrow n_j \succ n_i, n_i, n_j \in \mathbf{N}$ 
			
			Let $\langle r^l, t\rangle \models n_i \prec n_j$. 
			
			If $i=j$ , by the definition of $n_i \prec n_i$ we have that 
			
			\begin{center} 			
				$\langle r^l, t \rangle \models n_i \wedge \mathtt{K}_i (\bigwedge_{n_j \in \mathbf{N}\backslash \{n_i\}} \neg n_j),$
			\end{center}
			
			so we have $\langle r^l, t\rangle \models n_i \succ n_j$. 
			
			Let $i \neq j$ and  $j < i \leqslant m$, then 
			$$\langle r^l, t\rangle \models  n_j \wedge n_i \wedge \mathtt{K}_j (\bigwedge_{k=j+1}^{i-1} \neg n_k) \wedge \mathtt{K}_j n_i,$$
			
			which means that: $$\langle r^l, t\rangle \models n_j \succ n_i.$$ Similarly in all other cases.
			
			\item[AS5] $n_i \succ n_j \rightarrow \mathtt{K}_i(n_i \succ n_j), n_i, n_j \in \mathbf{N}$
			
			Suppose opposite, that 
			$$\langle r^l, t\rangle \models (n_i \succ n_{j}) \wedge \neg \mathtt{K}_i (n_i \succ n_{j}) .$$
			
			Then we have:
			$$\langle r^l, t\rangle \models n_i \succ n_{j} \text{ and } \langle r^l, t\rangle \models \neg \mathtt{K}_i (n_i \succ n_{j}) ),$$
			
			and 
			$$\langle r^l, t\rangle \not\models \mathtt{K}_i (n_i \succ n_{j}).$$
			
			By [AK3], we have that:
			$$\langle r^l, t\rangle \not\models n_i \succ n_{j},$$
			
			which cannot hold.

			\item[AS6] $((n_i \succ n_j) \wedge n_k \mathtt{M} \langle n_i, n_j \rangle \wedge \bigcirc(\neg \mathtt{K}_i n_k)) \rightarrow \bigcirc(n_i \succ n_j), n_i, n_j, n_k \in \mathbf{N}$ 
			
			Let $\langle r^{j'}, t\rangle \models ((n_i \succ n_j) \wedge n_k \mathtt{M} \langle n_i, n_j \rangle \wedge \bigcirc(\neg \mathtt{K}_i n_k))$, and suppose that $\langle r^{j'}, t+1\rangle \not\models (n_i \succ n_j)$. Without loss of generality we can assume that $j < i \leqslant m-1$. The, by the definition of $\succ$ and $\mathtt{M}$ we have that 
			\begin{equation}\label{korak}
			\langle r^{j'}, t\rangle \models n_i \wedge n_j \wedge \mathtt{K}_i (\bigwedge_{l=i+1}^{j-1} \neg n_l) \wedge \bigcirc \mathtt{K}_i (\bigwedge_{l=i+1}^{j-1} \neg n_l) \wedge \mathtt{K}_i n_j.
			\end{equation}

			From  $\langle r^{j'}, t+1\rangle \not\models (n_i \succ n_j)$ we can conclude that for some $k, i<k<j$ $\langle r^{j'}, t+1\rangle \models (n_i \succ n_k)$. By the definition of $\succ$: 
			$$\langle r^{j'}, t+1\rangle \models n_i \wedge n_k \wedge \mathtt{K}_i (\bigwedge_{l=i+1}^{k-1} \neg n_l) \wedge \mathtt{K}_i n_k,$$
			
			which is in contradiction with (\ref{korak}).
			\end{itemize}
			
			[MP], [RTN] and [RKN] in standard way (see \cite{omegakv, zorano2006, branchtime})
			
			\begin{itemize}
			\item[RI] from $\Phi_{k}(\neg((\bigwedge_{l=0}^i \bigcirc^l \alpha) \wedge \bigcirc^{i+1} \beta), (\theta_j)_{j\in \mathbb{N}})$ for all $i \geqslant 0$ infer $\Phi_{k}(\neg (\alpha \mathtt{U} \beta), (\theta_j)_{j\in \mathbb{N}})$
			
			We show that [RI] produces valid formula for a valid set of premises by induction on $k$. Suppose that
			$$\langle r^{j'}, t\rangle \models \Phi_{k}(\neg((\bigwedge_{l=0}^i \bigcirc^l \alpha) \wedge \bigcirc^{i+1} \beta), (\theta_j)_{j\in \mathbb{N}}), \text{ for } i \geqslant 0.$$ Then $$\langle r^{j'}, t\rangle \models \Phi_{k}(\neg (\alpha \mathtt{U} \beta), (\theta_j)_{j\in \mathbb{N}})$$ by following:
			Induction base. 
			
			$k=0$:
			
			Note that:
			
			if $\langle r^{j'}, t\rangle \not\models \theta_0 \rightarrow \neg (\alpha \mathtt{U} \beta)$ then
						
			$\langle r^{j'}, t\rangle \models \theta_0 \wedge (\alpha \mathtt{U} \beta)$ iff
						
			$\langle r^{j'}, t\rangle \models \theta_0 $ and $\langle r^j, t\rangle \models \alpha \mathtt{U} \beta$ iff
			
			\begin{equation}
			\langle r^{j'}, t\rangle \models \theta_0 \text{ and }\langle r^{j'}, t+i_0\rangle \models \beta \text{ and } \langle r^{j'}, t+l\rangle \models \alpha, 0\leqslant l<i_0 \label{pomoc}
			\end{equation}

			$\langle r^{j'}, t\rangle \models \{\theta_0 \rightarrow \neg((\bigwedge_{l=0}^i \bigcirc^l \alpha) \wedge \bigcirc^{i+1} \beta) | i \geqslant 0\}$ iff
			
			\begin{equation}
			\langle r^{j'}, t\rangle \models \neg \theta_0 \text{ and }(\langle r^{j'}, t+i\rangle \not\models \beta \text{ and }\langle r^{j'}, t+l\rangle \not\models \alpha, 0\leqslant l<i \text{ for all } i \geqslant 0  
			\end{equation}
			
			which in contradiction with (\ref{pomoc}).

			Inductive step. 
			
			Let $\langle r^{j'}, t\rangle \models \Phi_{k+1}(\neg((\bigwedge_{l=0}^i \bigcirc^l \alpha) \wedge \bigcirc^{i+1} \beta), (\theta_j)_{j\in \mathbb{N}})$ for $i \geqslant 0$, i.e.
			$$\langle r^{j'}, t\rangle \models \theta_{k+1} \rightarrow \mathtt{K}_{e_k} \Phi_{k}(\neg((\bigwedge_{l=0}^i \bigcirc^l \alpha) \wedge \bigcirc^{i+1} \beta), (\theta_j)_{j\in \mathbb{N}})$$ for $i \geqslant 0$ and $0 \leqslant e_k <m$.
			Let us assume opposite, that 
			$$\langle r^{j'}, t\rangle \not\models \Phi_{k+1}(\neg (\alpha \mathtt{U} \beta), (\theta_j)_{j\in \mathbb{N}}), \text{ i.e.}$$ 
			$$\langle r^{j'}, t\rangle \models \theta_{k+1} \wedge \neg \Phi_{k}(\neg (\alpha \mathtt{U} \beta), (\theta_j)_{j\in \mathbb{N}}).$$
			Also, we have:
			$$\langle r^{j'}, t\rangle \models \mathtt{K}_{e_k} \Phi_{k}(\neg((\bigwedge_{l=0}^i \bigcirc^l \alpha) \wedge \bigcirc^{i+1} \beta), (\theta_j)_{j\in \mathbb{N}})$$ for $i \geqslant 0$. For every $\langle r^{j''}, t'\rangle \in \mathcal{K}_{e_k}(\langle r^{j'}, t\rangle)$ we have that:
			$$\langle r^{j''}, t'\rangle \models \Phi_{k}(\neg((\bigwedge_{l=0}^i \bigcirc^l \alpha) \wedge \bigcirc^{i+1} \beta), (\theta_j)_{j\in \mathbb{N}})$$ and by induction hypothesis 
			$$\langle r^{j''}, t'\rangle \models \Phi_{k}(\neg (\alpha \mathtt{U} \beta), (\theta_j)_{j\in \mathbb{N}}).$$ Therefore:
			$$\langle r^{j'}, t\rangle \models \mathtt{K}_{e_k} \Phi_{k}(\neg (\alpha \mathtt{U} \beta), (\theta_j)_{j\in \mathbb{N}})$$ which is a contradiction.
			
		\end{itemize}
		
	\end{proof}

\begin{theorem}[Deduction theorem]
	$T \cup \{\varphi\} \vdash \psi$ implies $T \vdash \varphi \rightarrow \psi$.
	\begin{proof}
		If $\psi$ is an axiom or $\psi \in T$, then $T \vdash \psi$, so since $T \vdash \psi \rightarrow (\varphi \rightarrow \psi)$ [A1] by [MP] $T \vdash \varphi \rightarrow \psi$. If $\varphi=\psi$ then $T \vdash \varphi \rightarrow \varphi$ [A1].
		
		If $\psi$ is a theorem then, $\vdash \bigcirc \psi$. By weakening $T \vdash \bigcirc \psi$, so $T \vdash \varphi \rightarrow \bigcirc \psi$. Similarly for [RKN] rule.
		
		
		Let us assume that $\psi$ if obtained from $T \cup \{\varphi\}$ using [RI] rule, i.e. $\psi = \Phi_{k}(\neg (\alpha \mathtt{U} \beta), (\theta_j)_{j\in \mathbb{N}})$. Then we have:
		\begin{itemize}
			\item[] $T, \varphi \vdash \Phi_{k}(\neg((\bigwedge_{l=0}^i \bigcirc^l \alpha) \wedge \bigcirc^{i+1} \beta), (\theta_j)_{j\in \mathbb{N}})$ for all $i \geqslant 0$,
			\item[] $T \vdash \varphi \rightarrow \Phi_{k}(\neg((\bigwedge_{l=0}^i \bigcirc^l \alpha) \wedge \bigcirc^{i+1} \beta), (\theta_j)_{j\in \mathbb{N}})$, by induction hypothesis,
			\item[] $T \vdash \varphi \rightarrow (\theta_k \rightarrow \mathtt{K}_{e_k} \Phi_{k-1}(\neg((\bigwedge_{l=0}^i \bigcirc^l \alpha) \wedge \bigcirc^{i+1} \beta), (\theta_j)_{j\in \mathbb{N}}))$, $0 \leqslant e_k <m$, by the definition of $\Phi_{k}$
			\item[] $T \vdash (\varphi \wedge \theta_k) \rightarrow \mathtt{K}_{e_k} \Phi_{k-1}(\neg((\bigwedge_{l=0}^i \bigcirc^l \alpha) \wedge \bigcirc^{i+1} \beta), (\theta_j)_{j\in \mathbb{N}})$, by propositional tautology $(p \rightarrow (q \rightarrow r)) \leftrightarrow ((p \wedge q) \rightarrow r)$.
			\item[] If we denote by $(\bar{\theta}_j)_{j\in \mathbb{N}}$ the sequence which coincides everywhere with $(\theta_j)_{j\in \mathbb{N}}$ for $j \neq k$, with the exception that $\bar{\theta}_k \equiv \varphi \wedge \theta_k$ we get that:
			\item[] $T \vdash \bar{\theta}_k \rightarrow \mathtt{K}_{e_{k-1}} \Phi_{k-1}(\neg((\bigwedge_{l=0}^i \bigcirc^l \alpha) \wedge \bigcirc^{i+1} \beta), (\bar{\theta}_j)_{j\in \mathbb{N}})$,
			\item[] $T \vdash \Phi_{k}(\neg((\bigwedge_{l=0}^i \bigcirc^l \alpha) \wedge \bigcirc^{i+1} \beta), (\bar{\theta}_j)_{j\in \mathbb{N}})$ for all $i \geqslant 0$,
			\item[] $T \vdash \Phi_{k}(\neg (\alpha \mathtt{U} \beta), (\bar{\theta}_j)_{j\in \mathbb{N}})$ by application of [RI]
			\item[] $T \vdash (\varphi \wedge \theta_k) \rightarrow \mathtt{K}_{e_{k-1}} \Phi_{k-1}(\neg (\alpha \mathtt{U} \beta), (\bar{\theta}_j)_{j\in \mathbb{N}})$
			\item[] $T \vdash \varphi \rightarrow (\theta_k \rightarrow \mathtt{K}_{e_{k-1}} \Phi_{k-1}(\neg (\alpha \mathtt{U} \beta), (\theta_j)_{j\in \mathbb{N}}))$
			\item[] $T \vdash \varphi \rightarrow \Phi_{k}(\neg (\alpha \mathtt{U} \beta), (\theta_j)_{j\in \mathbb{N}}))$
			\item[] $T \vdash \varphi \rightarrow \psi$.
		\end{itemize}
	\end{proof}
\end{theorem}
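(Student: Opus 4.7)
The plan is to prove the deduction theorem by induction on the length (or, equivalently, the structural form) of the derivation of $\psi$ from $T \cup \{\varphi\}$. First I would dispose of the base cases: if $\psi$ is an axiom instance, or $\psi \in T$, then $T \vdash \psi$, so combining with the tautology $\psi \to (\varphi \to \psi)$ and [MP] yields $T \vdash \varphi \to \psi$; the case $\psi = \varphi$ is settled by the tautology $\varphi \to \varphi$. For the inductive step under [MP], I would use the standard propositional reshuffling via $(p \to (q \to r)) \to ((p \to q) \to (p \to r))$ applied to the two inductive hypotheses.

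The two necessitation-style rules [RTN] and [RKN] need a short comment: by the side condition stated in the definition of $\vdash$, they can be applied only to theorems, so if $\psi = \bigcirc \alpha$ (resp.\ $\mathtt{K}_i \alpha$) is obtained this way, then $\alpha$ is already a theorem and $\vdash \psi$ holds outright; weakening and the tautology trick then give $T \vdash \varphi \to \psi$. At this stage the only remaining case is the infinitary rule [RI], and this will be the main obstacle because one cannot simply ``push'' $\varphi \to$ inside the infinitely many premises and recover the conclusion directly: the rule requires the premises to have a very specific form $\Phi_k(\cdot,(\theta_j)_{j\in\mathbb{N}})$ in both the premises and the conclusion.

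The trick I would employ for [RI] is to absorb $\varphi$ into the sequence $(\theta_j)_{j \in \mathbb{N}}$. Assuming $\psi = \Phi_k(\neg(\alpha \mathtt{U} \beta),(\theta_j)_{j\in \mathbb{N}})$ is derived by [RI] from the family $\Phi_k(\neg((\bigwedge_{l=0}^i \bigcirc^l \alpha) \wedge \bigcirc^{i+1}\beta),(\theta_j)_{j\in\mathbb{N}})$ for $i \geq 0$, I apply the induction hypothesis to each premise to bring $\varphi$ on the left, then use the propositional equivalence $(\varphi \to (\theta_k \to \mathtt{K}_{e_k} \chi)) \leftrightarrow ((\varphi \wedge \theta_k) \to \mathtt{K}_{e_k}\chi)$ to fold $\varphi$ into $\theta_k$. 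Defining the modified sequence $(\bar{\theta}_j)_{j\in \mathbb{N}}$ by $\bar{\theta}_k = \varphi \wedge \theta_k$ and $\bar{\theta}_j = \theta_j$ for $j \neq k$, the reformulated premises $\Phi_k(\neg((\bigwedge_{l=0}^i \bigcirc^l \alpha) \wedge \bigcirc^{i+1}\beta),(\bar{\theta}_j)_{j\in \mathbb{N}})$ are all derivable from $T$, so a fresh application of [RI] gives $T \vdash \Phi_k(\neg(\alpha \mathtt{U} \beta),(\bar{\theta}_j)_{j\in \mathbb{N}})$. Unfolding the $\bar{\theta}_k$ back and using the reverse propositional equivalence yields $T \vdash \varphi \to \psi$, completing the induction.

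The subtle point I would be most careful about is justifying that the choice of the index $k$ in which to absorb $\varphi$ is legitimate uniformly across all $i \geq 0$, and that the induction on the proof length remains well-founded even though the [RI] rule has infinitely many premises; this is handled by noting that each premise has a strictly shorter derivation and the induction is on the proof-tree ordinal rather than on natural numbers.
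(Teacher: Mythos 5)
Your proposal is correct and follows essentially the same route as the paper: identical base cases, the same observation that [RTN]/[RKN] apply only to theorems, and—crucially—the same key trick for [RI] of absorbing $\varphi$ into the $k$-th member of the sequence via $\bar{\theta}_k \equiv \varphi \wedge \theta_k$ so that a fresh application of [RI] with the modified sequence $(\bar{\theta}_j)_{j\in\mathbb{N}}$ yields the conclusion. Your explicit remarks on the [MP] step and on the well-foundedness of the induction over an infinitary proof tree are details the paper leaves implicit, but they do not change the argument.
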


\begin{definition}
For a given set of formulae $T$ we define set $\ast T =\{\ast \alpha | \alpha \in T\}$ and $\ast \in \{\mathtt{K}_i, \bigcirc, \CIRCLE\}, 0 \leqslant i < m$. Also, for a given set of formulae $T$ we define set $\mathtt{K}_i^{-}(T)=\{\alpha|\mathtt{K}_i \alpha \in T\}$.
\end{definition}

\begin{lemma}
	Let $\alpha, \beta$ be formulae:
	\begin{itemize}
		\item[LF1] $\vdash \mathtt{G} \alpha \leftrightarrow \alpha \wedge \bigcirc \mathtt{G} \alpha$, 
		\item[LP1] $\vdash \mathtt{H} \alpha \leftrightarrow \alpha \wedge \CIRCLE \mathtt{H} \alpha$, 
		\item[LF2] $\vdash \mathtt{G} \bigcirc\alpha \leftrightarrow \bigcirc \mathtt{G} \alpha$,
		\item[LP2] $\vdash \mathtt{H} \CIRCLE\alpha \leftrightarrow \CIRCLE \mathtt{H} \alpha$,
		\item[LF3] $(\bigcirc \alpha \rightarrow \bigcirc \beta) \rightarrow \bigcirc (\alpha \rightarrow \beta)$,
		\item[LP3] $(\CIRCLE \alpha \rightarrow \CIRCLE \beta) \rightarrow \CIRCLE (\alpha \rightarrow \beta)$,
		\item[LF4] $(\bigcirc \alpha \wedge \bigcirc \beta) \leftrightarrow \bigcirc (\alpha \wedge \beta)$,
		\item[LP4] $(\CIRCLE \alpha \wedge \CIRCLE \beta) \leftrightarrow \CIRCLE (\alpha \wedge \beta)$,
		\item[LF5] $(\bigcirc \alpha \vee \bigcirc \beta) \leftrightarrow \bigcirc (\alpha \vee \beta)$,
		\item[LP5] $(\CIRCLE \alpha \vee \CIRCLE \beta) \leftrightarrow \CIRCLE (\alpha \vee \beta)$,
		\item[LF6] $\mathtt{G} \alpha \vdash \bigcirc^i \alpha$, $i \geqslant 0$, 
		\item[LP6] $\mathtt{H} \alpha \vdash (\CIRCLE \bot \wedge \alpha) \vee \CIRCLE^i \alpha$, $i \geqslant 0$,
		\item[LF7] if $\vdash \alpha$ then $\vdash \mathtt{G} \alpha$,
		\item[LP7] if $\vdash \alpha$ then $\vdash \mathtt{H} \alpha$,
		\item[LF8] if $T \vdash \alpha$, where $T$ is a set of formulae, then $\bigcirc T \vdash \bigcirc \alpha$,
		\item[LP8] if $T \vdash \alpha$, where $T$ is a set of formulae, then $\CIRCLE T \vdash \CIRCLE \alpha$,
		\item[LF9] for $j \geqslant 0$, $\bigcirc^j \beta, \bigcirc^0 \alpha, \ldots , \bigcirc^{j-1} \alpha \vdash \alpha \mathtt{U} \beta$, 
		\item[LP9] for $j \geqslant 0$, $\CIRCLE^{j} \beta, \CIRCLE^{j-1} \alpha, \ldots , \CIRCLE^{0} \alpha \vdash \alpha \mathtt{S} \beta$,
		\item[LK] if $T \vdash \gamma$, where $T$ is a set of formulae, then $\mathtt{K}_e T \vdash \mathtt{K}_e \gamma$ for any $0 \leqslant e < m$.
	\end{itemize}
	\begin{proof}
		
		\begin{itemize}
			\item[LF1] $\vdash \mathtt{G} \alpha \leftrightarrow \alpha \wedge \bigcirc \mathtt{G} \alpha$
			\begin{itemize}
				\item[] $\vdash \neg (\top \mathtt{U} \neg \alpha) \leftrightarrow \neg (\neg \alpha \vee (\top \wedge \bigcirc (\top \mathtt{U} \neg \alpha)))$ (by definition of $\mathtt{G}$ and [AT4])
				\item[] $\vdash \neg (\top \mathtt{U} \neg \alpha) \leftrightarrow  \alpha \wedge (\bot \vee \bigcirc \neg (\top \mathtt{U} \neg \alpha))$ (by [AT1])
				\item[] $\vdash \neg (\top \mathtt{U} \neg \alpha) \leftrightarrow  \alpha \wedge  \bigcirc \neg (\top \mathtt{U} \neg \alpha)$ (property of $\vee$)
				\item[] $\vdash \mathtt{G} \alpha \leftrightarrow \alpha \wedge \bigcirc \mathtt{G} \alpha$ (by definition of $\mathtt{G}$)
			\end{itemize}
			
			\item[LF2] -- LF7 The proofs are the consequences of the temporal part of the above axiomatization.
			
			\item[LF8] if $T \vdash \alpha$, where $T$ is a set of formulae, then $\bigcirc T \vdash \bigcirc \alpha$
			
			We will prove this by the induction on the length of the proof of $\alpha$ from $T$.
			
			Suppose that $\alpha$ is obtained by the inference rule [MP] from $\beta \rightarrow \alpha$ and $\beta$. Then we have:
			\begin{itemize}
				\item[] $\bigcirc T \vdash \bigcirc (\beta \rightarrow \alpha)$ (induction hypothesis)
				\item[] $\bigcirc T \vdash \bigcirc (\beta \rightarrow \alpha) \rightarrow (\bigcirc \beta \rightarrow \bigcirc \alpha)$ [AT2]
				\item[] $\bigcirc T \vdash \bigcirc \beta \rightarrow \bigcirc \alpha$ [MP]
				\item[] $\bigcirc T \vdash \bigcirc \beta$ (induction hypothesis)
				\item[] $\bigcirc T \vdash \bigcirc \alpha$ [MP]
			\end{itemize} 
			
			Similarly we can prove the case when $\alpha$ is obtained using [RTN] and [RKN].
			
			Suppose that $\alpha=\Phi_{k}(\neg (\gamma \mathtt{U} \beta), (\theta_j)_{j\in \mathbb{N}}) = \theta_k \rightarrow \mathtt{K}_i \Phi_{k-1}(\neg (\gamma \mathtt{U} \beta), (\theta_j)_{j\in \mathbb{N}})$ is obtained by the inference rule [RI]. Then:
			\begin{itemize}
				\item[] for $0 \leqslant e <m, i \geqslant 0$, $\bigcirc T \vdash \bigcirc \Phi_{k}(\neg((\bigwedge_{l=0}^i \bigcirc^l \gamma) \wedge \bigcirc^{i+1} \beta), (\theta_j)_{j\in \mathbb{N}})$, by induction hypothesis
				
				\item[] for  $0 \leqslant e <m, i \geqslant 0$, $\bigcirc T \vdash \bigcirc (\theta_k \rightarrow \mathtt{K}_e \Phi_{k-1}(\neg((\bigwedge_{l=0}^i \bigcirc^l \alpha) \wedge \bigcirc^{i+1} \beta), (\theta_j)_{j\in \mathbb{N}}))$, by definition of $\Phi_{k}$
				\item[] for  $0 \leqslant e <m, i \geqslant 0$, $\bigcirc T \vdash \bigcirc (\theta_k \rightarrow \mathtt{K}_e \Phi_{k-1}(\neg((\bigwedge_{l=0}^i \bigcirc^l \alpha) \wedge \bigcirc^{i+1} \beta), (\theta_j)_{j\in \mathbb{N}})) \rightarrow   (\bigcirc \theta_k \rightarrow \bigcirc \mathtt{K}_e \Phi_{k-1}(\neg((\bigwedge_{l=0}^i \bigcirc^l \alpha) \wedge \bigcirc^{i+1} \beta), (\theta_j)_{j\in \mathbb{N}}))$ [AT2]
				\item[] for  $0 \leqslant e <m, i \geqslant 0$, $\bigcirc T \vdash \bigcirc \theta_k \rightarrow \bigcirc \mathtt{K}_e \Phi_{k-1}(\neg((\bigwedge_{l=0}^i \bigcirc^l \alpha) \wedge \bigcirc^{i+1} \beta), (\theta_j)_{j\in \mathbb{N}})$ [MP]
				\item[] for $0 \leqslant e <m$, $\bigcirc T \vdash (\bigcirc \theta_k \rightarrow \bigcirc \mathtt{K}_e \Phi_{k-1}(\neg (\gamma \mathtt{U} \beta), (\theta_j)_{j\in \mathbb{N}}))$ [RI]
				\item[] for $0 \leqslant e <m$, $\bigcirc T \vdash (\bigcirc \theta_k \rightarrow \bigcirc \mathtt{K}_e \Phi_{k-1}(\neg (\gamma \mathtt{U} \beta), (\theta_j)_{j\in \mathbb{N}})) \rightarrow \bigcirc (\theta_k \rightarrow \mathtt{K}_e \Phi_{k-1}(\neg (\gamma \mathtt{U} \beta), (\theta_j)_{j\in \mathbb{N}}))$ (LF3)
				\item[] for $0 \leqslant e <m$, $\bigcirc T \vdash \bigcirc (\theta_k \rightarrow \mathtt{K}_e \Phi_{k-1}(\neg (\gamma \mathtt{U} \beta), (\theta_j)_{j\in \mathbb{N}}))$ [MP]
				\item[] $\bigcirc T \vdash \bigcirc \Phi_{k}(\neg (\gamma \mathtt{U} \beta), (\theta_j)_{j\in \mathbb{N}})$ , by definition of $\Phi_{k}$.
			\end{itemize}
			
			\item[LF9] for $j \geqslant 0$, $\bigcirc^j \beta, \bigcirc^0 \alpha, \ldots , \bigcirc^{j-1} \alpha \vdash \alpha \mathtt{U} \beta$
			
			By propositional reasoning we can obtain:
			\begin{align*}
			\bigcirc^j \beta, \bigcirc^0 \alpha, \ldots \bigcirc^{j-1} \alpha \vdash & \beta \vee (\alpha \wedge (\bigcirc \beta \vee (\bigcirc \alpha \wedge (\ldots (\bigcirc^{j-1}\beta \vee (\bigcirc^{j-1}\alpha \wedge \\
			& (\bigcirc^j \beta \vee (\bigcirc^j \alpha \wedge \bigcirc^{j+1}(\alpha \mathtt{U} \beta))))))\ldots))).
			\end{align*}
			
			Since 
			$$\vdash \beta \vee (\alpha \wedge (\bigcirc \beta \vee (\bigcirc \alpha \wedge (\ldots (\bigcirc^{j-1}\beta \vee (\bigcirc^{j-1}\alpha \wedge (\bigcirc^j \beta \vee (\bigcirc^j \alpha \wedge \bigcirc^{j+1}(\alpha \mathtt{U} \beta))))))\ldots))) \rightarrow \alpha \mathtt{U} \beta$$
			can be gained using [AT3], we have
			$$\bigcirc^j \beta, \bigcirc^0 \alpha, \ldots , \bigcirc^{j-1} \alpha \vdash \alpha \mathtt{U} \beta.$$
			
			\item[LP1] -- LP9 The proofs are similar to the [LF1 -- LF9] respectively.

			\item[LK] if $T \vdash \gamma$, where $T$ is a set of formulae, then $\mathtt{K}_e T \vdash \mathtt{K}_e \gamma$ for any $0 \leqslant e < m$
			
			We use the transfinite induction on the length of proof $T \vdash \gamma$. Suppose that $T \vdash \gamma$ where $\gamma \equiv \Phi_{k}(\neg (\alpha \mathtt{U} \beta), (\theta_j)_{j\in \mathbb{N}}))$ is obtained using [RI] rule. Then:
			
			\begin{itemize}
				\item[] $T \vdash \Phi_{k}(\neg((\bigwedge_{l=0}^i \bigcirc^l \alpha) \wedge \bigcirc^{i+1} \beta), (\theta_j)_{j\in \mathbb{N}})$ for all  $i \geq 0$
				\item[] $\mathtt{K}_e T \vdash \mathtt{K}_e \Phi_{k}(\neg((\bigwedge_{l=0}^i \bigcirc^l \alpha) \wedge \bigcirc^{i+1} \beta), (\theta_j)_{j\in \mathbb{N}})$ by induction hypothesis,
				\item[] $\mathtt{K}_e T \vdash \top \rightarrow \mathtt{K}_e \Phi_{k}(\neg((\bigwedge_{l=0}^i \bigcirc^l \alpha) \wedge \bigcirc^{i+1} \beta), (\theta_j)_{j\in \mathbb{N}})$ for all $i \geqslant 0$
				\item[] $\mathtt{K}_e T \vdash  \Phi_{k+1}(\neg((\bigwedge_{l=0}^i \bigcirc^l \alpha) \wedge \bigcirc^{i+1} \beta), (\bar{\theta}_j)_{j\in \mathbb{N}})$ where $(\bar{\theta}_j)_{j\in \mathbb{N}}$ is a nested $k+1$-sequence such that $\bar{\theta}_{k+1} \equiv \top$, and which coincides everywhere with  $(\bar{\theta}_j)_{j\in \mathbb{N}}$ for $j \neq k+1$
				\item[] $\mathtt{K}_e T \vdash \Phi_{k+1}(\neg (\alpha \mathtt{U} \beta), (\theta_j)_{j\in \mathbb{N}}))$ by [RI]
				\item[] $\mathtt{K}_e T \vdash \top \rightarrow \mathtt{K}_e \Phi_{k}(\neg (\alpha \mathtt{U} \beta), (\theta_j)_{j\in \mathbb{N}}))$
				\item[] $\mathtt{K}_e T \vdash \top \rightarrow \mathtt{K}_e \gamma$ 
				\item[] $\mathtt{K}_e T \vdash \mathtt{K}_e \gamma$
			\end{itemize}
		\end{itemize}
	\end{proof}
\end{lemma}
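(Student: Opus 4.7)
The lemma collects nineteen syntactic derivability facts that together form the basic ``toolkit'' for the rest of the proofs in the paper. I would organize the proof into four groups of items that can be handled with the same technique within each group, since several items are exact mirrors of each other under the future/past duality.

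The first group consists of the purely temporal manipulations LF1--LF7 and their past-time duals LP1--LP7. For LF1, I would unfold the definition $\mathtt{G}\alpha \equiv \neg\mathtt{F}\neg\alpha \equiv \neg((\psi\rightarrow\psi)\mathtt{U}\neg\alpha)$, apply AT3 once to expand the innermost $\mathtt{U}$, and clean up with AT1 and propositional reasoning; LP1 is identical using AT6 and AT1 (the $\CIRCLE\bot$ branch collapses since $\neg\alpha$ either holds at $t=0$ or it does not). LF2 and LP2 follow from AT1/AT5 plus LF1/LP1 by unfolding both sides one step and comparing. LF3 and LP3 are the contrapositives of AT2 and AT5 massaged by propositional logic. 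LF4/LF5 and LP4/LP5 are two-way applications of AT2/AT5 together with tautological rewriting. LF6 reduces to iterating LF1 $i$ times; LP6 is the same, but the case $\langle r^j,0\rangle$ forces the disjunct $\CIRCLE\bot\wedge\alpha$. LF7/LP7 come from RTN/LP-based induction using LF1/LP1 applied to the empty-context theoremhood of $\alpha$.

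The second, and most delicate, group is LF8, LP8, and LK, which all say ``if $T \vdash \gamma$ then $\ast T \vdash \ast \gamma$'' with $\ast \in \{\bigcirc,\CIRCLE,\mathtt{K}_e\}$. The plan is a transfinite induction on the length of the derivation $T\vdash\gamma$. The cases of axioms and members of $T$ are immediate; the MP case uses AT2 (resp.\ AT5, AK2) and a second MP; the RTN/RKN cases are trivial because those rules only apply to theorems. The main obstacle, as the authors also single out in the text, will be the RI case. There $\gamma \equiv \Phi_{k}(\neg(\alpha \mathtt{U} \beta),(\theta_j)_{j\in\mathbb{N}})$ was obtained from the infinite family $\Phi_{k}(\neg((\bigwedge_{l=0}^{i}\bigcirc^l\alpha)\wedge\bigcirc^{i+1}\beta),(\theta_j)_{j\in\mathbb{N}})$ for all $i\ge 0$. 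My plan is: apply the induction hypothesis to every premise in the family to obtain $\ast T$-derivability of $\ast$ applied to each; then use AT2/AT5 together with LF3/LP3 (or AK2 with RKN for the $\mathtt{K}_e$ case) to push $\ast$ \emph{inside} the implications $\theta_k\rightarrow\mathtt{K}_{e_k}(\cdots)$ that make up $\Phi_{k}$; and finally reapply RI to collapse the resulting infinite family back to $\ast\Phi_{k}(\neg(\alpha\mathtt{U}\beta),(\theta_j))$. For LK the extra trick is that $\mathtt{K}_e \Phi_k$ is not yet of the form $\Phi_{k+1}$, so I would introduce the shifted nested sequence $(\bar\theta_j)$ with $\bar\theta_{k+1}\equiv\top$ (so $\bar\theta_{k+1}\rightarrow\mathtt{K}_e\Phi_k$ is logically equivalent to $\mathtt{K}_e\Phi_k$), apply RI inside the $\Phi_{k+1}$ form, and then discharge the trivial $\top$ antecedent. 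This shift is precisely what bridges the gap between ``commuting $\mathtt{K}_e$ through a derivation'' and the infinitary rule.

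The third group, LF9 and LP9, is a direct unfolding exercise. For LF9 I would iterate AT3 $j+1$ times to get a chain of the form $\beta \vee (\alpha\wedge\bigcirc(\beta\vee(\alpha\wedge\bigcirc(\cdots(\bigcirc^j\beta\vee(\bigcirc^j\alpha\wedge\bigcirc^{j+1}(\alpha\mathtt{U}\beta)))\cdots))))$, observe that each $\bigcirc^i\alpha$ and $\bigcirc^j\beta$ is available as an assumption, and conclude $\alpha\mathtt{U}\beta$ by pure propositional reasoning together with one application of AT3 in the direction $\beta\vee(\alpha\wedge\bigcirc(\alpha\mathtt{U}\beta))\rightarrow\alpha\mathtt{U}\beta$. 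LP9 is the same argument using AT6 and AT10 to handle the initial-time branch $\CIRCLE\bot\wedge\beta$. Overall the hard part is the RI step in LF8/LP8/LK; everything else is bookkeeping once the future/past duality is exploited to halve the work.
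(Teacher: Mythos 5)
Your proposal matches the paper's own proof in all essentials: the explicit unfolding of $\mathtt{G}$ via AT3 and AT1 for LF1, the (transfinite) induction on derivation length for LF8/LP8/LK with the RI case handled by pushing the operator through the implications via AT2 and LF3 and then reapplying RI, the $\bar{\theta}_{k+1}\equiv\top$ shift to cast $\mathtt{K}_e\Phi_k$ as a $\Phi_{k+1}$ for LK, and the iterated-AT3 unfolding for LF9, with the past-time duals dispatched by symmetry. The only cosmetic difference is that you correctly cite AT3 where the paper's LF1 derivation miscites AT4; the argument itself is the same.
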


\maxext*	
	\begin{proof}
		Let us assume that $For = \{ \alpha_i | i \geqslant 0\}$ is the set of all formulas. The maximally consistent set $T^*$ is defined recursively, as follows:
		
		\begin{enumerate}
			\item $T_0 = T$,
			\item If $\alpha_i$ is consistent with $T_i$ then $T_{i+1} = T_i \cup \{\alpha_i\}$,
			\item If $\alpha_i$ is not consistent with $T_i$ and has form $\Phi_{k}(\neg (\alpha \mathtt{U} \beta), (\theta_j)_{j\in \mathbb{N}}))$ then
			$$T_{i+1} = T_i \cup \{\neg \alpha_i, \neg \Phi_{k}(\neg((\bigwedge_{l=0}^{n_0} \bigcirc^l \alpha) \wedge \bigcirc^{n_0+1} \beta), (\theta_j)_{j\in \mathbb{N}})\}$$
			
			where $n_0$ is a positive integer such that $T_{i+1}$ is consistent,
			\item Otherwise $T_{i+1} = T_i$,
			\item $T^* = \bigcup_{n \geqslant 0} T_n$.
		\end{enumerate}
		
		The set $T_{i+1}$ obtained by the steps 2 or 4 is obviously consistent. Let us consider the step 3.
		
		If we suppose that $\neg \Phi_{k}(\neg((\bigwedge_{l=0}^{n} \bigcirc^l \alpha) \wedge \bigcirc^{n+1} \beta), (\theta_j)_{j\in \mathbb{N}})$ is not consistent with $T_i$ for every $n \geqslant 0$ then by Deduction theorem, $T_i \vdash \Phi_{k}(\neg((\bigwedge_{l=0}^{n} \bigcirc^l \alpha) \wedge \bigcirc^{n+1} \beta), (\theta_j)_{j\in \mathbb{N}})$ for every $n \geqslant 0$, and by [RI] we have $T_i \vdash \Phi_{k}(\neg (\alpha \mathtt{U} \beta), (\theta_j)_{j\in \mathbb{N}}))$ which contradicts the assumption. Thus, the set $T_i$ obtained by the step 3 is also consistent. Also, the construction guarantees that for each $\alpha \in For$, either $\alpha \in T^*$ or  $\neg \alpha \in T^*$.
		
		To prove that that $T^*$ is deductively closed it is sufficient to prove that it is closed under the inference rules. We will only prove closeness under the inference rule [RI] since the other cases are straightforward.
		
		Suppose that $\Phi_{k}(\neg (\alpha \mathtt{U} \beta), (\theta_j)_{j\in \mathbb{N}})) \notin T^*$, while $\Phi_{k}(\neg((\bigwedge_{l=0}^{n} \bigcirc^l \alpha) \wedge \bigcirc^{n+1} \beta), (\theta_j)_{j\in \mathbb{N}}) \in T^*$ for every $n \geqslant 0$. By maximality of $T^*$, $\neg \Phi_{k}(\neg (\alpha \mathtt{U} \beta), (\theta_j)_{j\in \mathbb{N}}) \in T^*$. If $\alpha_i = \Phi_{k}(\neg (\alpha \mathtt{U} \beta), (\theta_j)_{j\in \mathbb{N}}))$, then, by the construction of $T^*$ there is $n_0$ such that $\neg \Phi_{k}(\neg((\bigwedge_{l=0}^{n_0} \bigcirc^l \alpha) \wedge \bigcirc^{n_0+1} \beta), (\theta_j)_{j\in \mathbb{N}}) \in T_i$ which contradicts the fact that $\Phi_{k}(\neg((\bigwedge_{l=0}^{n} \bigcirc^l \alpha) \wedge \bigcirc^{n+1} \beta), (\theta_j)_{j\in \mathbb{N}}) \in T^*$ for every $n \geqslant 0$.
	\end{proof}

\begin{lemma}
	${T_j^t}$ is a maximal consistent set.
	\begin{proof}
		The proof is by induction on $t$. By hypothesis, ${T_j^0}$ is maximal and consistent. Let $t \geqslant 0$ and ${T_j^t}$ be maximal and consistent.
		
		Suppose that ${T_j^{t+1}}$ is not maximal. There is a formula $\alpha$ such that $\{\alpha, \neg \alpha\} \cap {T_j^{t+1}} = \varnothing$. Consequently, $\{\bigcirc \alpha, \bigcirc \neg \alpha\} \cap {T_j^{t}} = \varnothing$. Thus, we have that $\{\bigcirc \alpha, \neg \bigcirc  \alpha\} \cap {T_j^{t}} = \varnothing$ which is in contradiction with the maximality of ${T_j^t}$. 
		
		Suppose that ${T_j^{t+1}}$ is not consistent, i.e. ${T_j^{t+1}} \vdash \alpha \wedge \neg \alpha $, for any formula $\alpha$. By [LF8], $\bigcirc {T_j^{t+1}} \vdash \bigcirc ( \alpha \wedge \neg \alpha)$ and ${T_j^{t}} \vdash \bigcirc ( \alpha \wedge \neg \alpha)$. By [LF4] and [AT1] we can show that ${T_j^{t}} \vdash \bigcirc \alpha \wedge \neg \bigcirc \alpha$, which is in contradiction with consistency of ${T_j^t}$.
	\end{proof}
\end{lemma}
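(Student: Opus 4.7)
The plan is to proceed by induction on $t$, where the base case $t=0$ is immediate because $T_j^0 = T$ is given to be a maximal consistent set. For the inductive step, assume $T_j^t$ is maximal and consistent, and establish the same two properties for $T_j^{t+1}$. The key observation driving both arguments is the defining equivalence $\alpha \in T_j^{t+1}$ iff $\bigcirc \alpha \in T_j^t$, which lets me translate questions about membership, derivability, and consistency at level $t+1$ into questions about the corresponding $\bigcirc$-prefixed formulas at level $t$.

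For maximality, I would argue by contradiction: suppose there exists $\alpha$ with $\{\alpha, \neg \alpha\} \cap T_j^{t+1} = \varnothing$. Unfolding the definition, this means $\bigcirc \alpha \notin T_j^t$ and $\bigcirc \neg \alpha \notin T_j^t$. But axiom AT1 gives $\bigcirc \neg \alpha \leftrightarrow \neg \bigcirc \alpha$, and $T_j^t$ is deductively closed and maximal, so from $\bigcirc \neg \alpha \notin T_j^t$ one gets $\neg \bigcirc \alpha \notin T_j^t$. Combined with $\bigcirc \alpha \notin T_j^t$, this contradicts the maximality of $T_j^t$.

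For consistency, I again reason by contradiction. If $T_j^{t+1} \vdash \alpha \wedge \neg \alpha$ for some $\alpha$, then Lemma LF8 (``if $\Gamma \vdash \beta$ then $\bigcirc \Gamma \vdash \bigcirc \beta$'') yields $\bigcirc T_j^{t+1} \vdash \bigcirc (\alpha \wedge \neg \alpha)$. By construction $\bigcirc T_j^{t+1} \subseteq T_j^t$, hence $T_j^t \vdash \bigcirc (\alpha \wedge \neg \alpha)$. Applying LF4 to distribute $\bigcirc$ over conjunction, and then AT1 to rewrite $\bigcirc \neg \alpha$ as $\neg \bigcirc \alpha$, we obtain $T_j^t \vdash \bigcirc \alpha \wedge \neg \bigcirc \alpha$, contradicting the consistency of $T_j^t$.

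The main obstacle, such as it is, is the consistency step: it depends on the nontrivial Lemma LF8, whose own proof must handle the infinitary rule [RI] carefully. Once LF8 is available, the argument is a clean interplay between the normality of $\bigcirc$ (AT2, LF4) and the duality AT1. No appeal to the other temporal or epistemic operators is needed, because the definition of $T_j^{t+1}$ isolates exactly the ``next-step'' fragment of $T_j^t$.
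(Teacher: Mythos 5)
Your proposal is correct and follows essentially the same route as the paper's own proof: induction on $t$, with maximality handled via the membership translation $\alpha \in T_j^{t+1}$ iff $\bigcirc\alpha \in T_j^t$ together with axiom AT1, and consistency handled via Lemma LF8, the inclusion $\bigcirc T_j^{t+1} \subseteq T_j^t$, and then LF4 and AT1. The only difference is that you spell out the appeal to AT1 and deductive closure in the maximality step slightly more explicitly than the paper does.
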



\strongcompletness*
	\begin{proof}
		We prove that  $\gamma \in {T_j^t}$ iff $\langle r^j, t \rangle \models \gamma$ by induction on complexity of $\gamma$.
		\begin{itemize}
			\item $\gamma \in \mathbf{N}$. This is immediate consequence of the definition of $\pi$.
			\item The proof in the cases when $\gamma$ is a negation or a conjunction is standard.
			\item $\gamma = \bigcirc \alpha$.
			
			$\langle r^j, t\rangle \models \bigcirc \alpha$ iff 
			$\langle r^j, t+1\rangle \models \alpha$ iff 
			$\alpha \in {T_j^{t+1}}$ iff
			$\bigcirc \alpha \in {T_j^t}$
			
			\item $\gamma = \alpha \mathtt{U} \beta$.
			
			Suppose that $\langle r^j, t\rangle \models \alpha \mathtt{U} \beta$. There is some $i \geqslant 0$ such that $\langle r, t +i \rangle \models \beta$ and for every $l$, $0 \leqslant l < i$, $\langle r^j, t +l\rangle \models \alpha$. By the induction hypothesis, $\beta \in {T_j^{t+i}}$, for $i \geqslant 0$, and $\alpha \in {T_j^{t+l}} $, for
			$0 \leqslant l < i$. By the construction of $\mathbb{M^*}$, we have $\bigcirc^i \beta \in{T_j^{t}}$, for $i \geqslant 0$, and  $\bigcirc^l \alpha \in {T_j^{t}}$, for $0 \leqslant l < i$. Thus, by [L9], we have that $ \alpha \mathtt{U} \beta \in {T_j^t}$.
			
			For the other direction, assume that $ \alpha \mathtt{U} \beta \in {T_j^t}$. By construction of the model $\mathbb{M^*}$, for some $i \geqslant 0$,$\bigcirc^i \beta \in{T_j^t}$, i.e. $\beta \in {T_j^{t+i}}$. Let $i_0 = \min \{i : \bigcirc^i \beta \in {T_j^t}\}$. If $i_0 = 0$, $\beta \in {T_j^t}$, and by the induction hypothesis $\langle r^j, t\rangle \models \beta$. It follows that $\langle r^j, t\rangle \models \alpha \mathtt{U} \beta$. 
			Thus, suppose that $i_0 > 0$. For every $i$ such that $0 \leqslant i < i_0$, $\bigcirc^i \beta \not\in {T_j^{t}}$, i.e. $\beta \not\in T_j^{t+i}$. So we have that $ \alpha \mathtt{U} \beta \in T_j^t$ and 
			$\beta \vee (\alpha \wedge (\bigcirc \beta \vee (\bigcirc \alpha \wedge \ldots \wedge (\bigcirc^{i_0-1} \beta \vee (\bigcirc^{i_0-1} \alpha \wedge \bigcirc^{i_0}(\alpha \mathtt{U} \beta) \ldots ) \in {T_j^t}$. It follows that for every $i < i_0$,
			$\bigcirc^i \alpha \in \langle r^j, t\rangle$, $\alpha \in \langle r^j, t+i \rangle$, and by the induction hypothesis $\langle r^j, t+i\rangle \models \alpha$. From $\langle r^j, t+i_0\rangle \models \beta$, it follows that $\langle r^j, t\rangle \models \alpha \mathtt{U} \beta$.
			
			\item $\gamma = \CIRCLE \alpha$.
			
			\begin{itemize}
				\item $t=0$
				
				If $\CIRCLE \alpha \in T_j^0$ by definition we have $\langle r^j, 0\rangle \models \CIRCLE \alpha$.
				
				Also, if $\langle r^j, 0\rangle \models \CIRCLE \alpha$ then $\CIRCLE \bot \in  T_j^0$, so:
				$\vdash \bot \rightarrow \alpha$ iff  $\vdash \CIRCLE(\bot \rightarrow \alpha)$ iff  $\vdash \CIRCLE\bot \rightarrow \CIRCLE\alpha$ iff $ \CIRCLE\bot \vdash \CIRCLE\alpha$ iff $\CIRCLE \alpha \in T_j^0$
				
				\item $t>0$
				
				$\CIRCLE \alpha \in T_j^t$ iff $\bigcirc\CIRCLE \alpha \in T_j^{t-1}$ iff $\alpha \in T_j^{t-1}$ iff $\langle r^j, t-1\rangle \models \alpha$ iff $\langle r^j, t\rangle \models \CIRCLE \alpha$
			\end{itemize}
			
			\item $\gamma = \alpha \mathtt{S} \beta$.
			\begin{itemize}
				\item $t=0$
				
				If $\alpha \mathtt{S} \beta \in  T_j^0$, by [AT6] $\beta \in T_j^0$, thus $\langle r^j, 0\rangle \models \alpha \mathtt{S} \beta$.
				
				If $\langle r^j, 0\rangle \models \alpha \mathtt{S} \beta$ then $\langle r^j, 0\rangle \models \beta$ and $\beta, \CIRCLE\bot \in T_j^0$, thus by [AT6] $\alpha \mathtt{S} \beta \in  T_j^0$.
				
				\item $t>0$
				
				Suppose that $\langle r^j, t\rangle \models \alpha \mathtt{S} \beta$. There is some $0 \leqslant i \leqslant t$ such that $\langle r^j, t-i \rangle \models \beta$ and for every $l$, $0 \leqslant l < i$, $\langle r^j, t-l \rangle \models \alpha$. Thus, 
				$\beta \in {T_j^{t-i}}$, for $0 \leqslant i \leqslant t$, and $\alpha \in {T_j^{t-l}} $, for
				$0 \leqslant l < i$. By the construction of $\mathbb{M^*}$, we have $\CIRCLE^i \beta \in{T_j^{t}}$, for $0 \leqslant i \leqslant t$, and  $\CIRCLE^l \alpha \in {T_j^{t}}$, for $0 \leqslant l < i$. Thus, by [LP9], we have that $ \alpha \mathtt{S} \beta \in {T_j^t}$.
				
				Contrariwise, $\alpha \mathtt{S} \beta \in T_j^t$. By [AT6] we have that $\alpha \wedge \CIRCLE (\alpha \mathtt{S} \beta)) \in T_j^t$. If $\beta \in  T_j^t$ then $\langle r^j, t\rangle \models \beta$ and consequently $\langle r^j, t\rangle \models \alpha \mathtt{S} \beta$. Otherwise, if  $\beta \not\in  T_j^t$ then  $\alpha \in  T_j^t$ and $\CIRCLE (\alpha \mathtt{S} \beta) \in T_j^t$. By [LF8] $\bigcirc\CIRCLE (\alpha \mathtt{S} \beta) \in T_j^{t-1}$, and by [AT8] $\alpha \mathtt{S} \beta \in T_j^{t-1}$. For some $k < t$, we have $\langle r^j, t\rangle \models \CIRCLE^k \beta$ and  $\langle r^j, t\rangle \models \CIRCLE^l \alpha$, for $0 \leqslant l < k$. Thus, we have that $ \langle r^j, t\rangle \models \alpha \mathtt{S} \beta$.
			\end{itemize}
			
			\item $\gamma = \mathtt{K}_i \alpha$. 
			
			Suppose $\mathtt{K}_i \alpha \in {T_j^t}$, then $\alpha \in \mathtt{K}_i^{-}({T_{j}^{t}})$. Also, ${T_j^t} \supset \mathtt{K}_i(\mathtt{K}_i^{-}({T_{j}^{t}}))$, so for each $\langle r^{j'}, t' \rangle$ such that $\langle r^j, t \rangle \mathcal{K}_i  \langle r^{j'}, t' \rangle$ (by the definition of relation $\mathcal{K}_i$), $\langle r^{j'}, t'\rangle \models \alpha$. By induction hypothesis ($\alpha$ is subformula of $\mathtt{K}_i \alpha$), we have that $\langle r^j, t\rangle \models \mathtt{K}_i \alpha$.
			
			Conversely, let $\langle r^j, t\rangle \models \mathtt{K}_i \alpha$ and assume the opposite, i.e. that $\mathtt{K}_i \alpha \notin {T_j^t}$. Then $\mathtt{K}_i^{-}({T_j^t}) \cup \{\neg \alpha\}$ is consistent. Otherwise, by Deduction theorem $\mathtt{K}_i^{-}({T_j^t}) \vdash \alpha$, by [LK], and ${T_j^t} \supset \mathtt{K}_i(\mathtt{K}_i^{-}({T_{j}^{t}})) \vdash \mathtt{K}_i \alpha$, by maximality of $T_j^t$, and $\mathtt{K}_i \alpha \in T_j^t$ which is a contradiction. Thus, $\mathtt{K}_i^{-}({T_j^t}) \cup \{\neg \alpha\}$ can be extended to a maximal consistent set $T_{j'}^{t'}$, and:
			$$n_i \in T_j^t \Rightarrow \mathtt{K}_i n_i \in T_j^t \Rightarrow n_i \in \mathtt{K}_i({T_j^t}) \Rightarrow n_i \in T_{j'}^{t'}.$$ 
			
			Similarly, for $\neg n_i \in T_j^t$. Thus, we have $\langle r^j, t \rangle \mathcal{K}_i  \langle r^{j'}, t' \rangle$. Since $\neg \alpha \in T_{j'}^{t'}$, then $\langle r^{j'}, t' \rangle \models \neg \alpha$ by induction hypothesis, so $\langle r^{j'}, t' \rangle \not\models \alpha$, which is a contradiction.
			
		\end{itemize}
	\end{proof}


\join*
	\begin{proof}
		Let us assume that $n_i, n_j \in \mathbf{N_a}$ and $n_i \Cap n_j$, i.e. $(n_i \succ n_j) \wedge (n_j \prec n_i)$ and that $n_k$ tries to join that stable pair. Let us denote $$\alpha = \CIRCLE(n_i \Cap n_j) \wedge \rho_{J,k}\bigwedge_{n_l \in I} \bigwedge_{t=0}^{5f} \bigcirc^t \neg n_l, I=\{n_l | n_l\mathtt{M}\langle n_i, n_j\rangle, n_k \neq n_l, n_j \neq n_l\}$$.
		
		We have,
		\begin{align}
		\alpha&\vdash \mathtt{K}_i (n_i \succ n_j) \wedge \mathtt{K}_j (n_j \prec n_i) \wedge n_k\mathtt{M}\langle n_i, n_{j}\rangle \wedge n_k \tag*{\text{(by AS6) }(1)}\\
		\alpha&\vdash \mathtt{K}_i ( n_i \succ n_j) \tag*{\text{(by definition of $\wedge$  and 1) } (2a)}\\
		\alpha&\vdash \mathtt{K}_j (n_j \prec n_i ) \tag*{\text{(by definition of $\wedge$  and 1) } (2b)}\\
		\alpha&\vdash n_k\mathtt{M}\langle n_i, n_{j}\rangle \tag*{\text{(by definition of $\wedge$  and 1) } (2c)}\\
		\alpha&\vdash n_k \tag*{\text{(by definition of $\wedge$  and 1) } (2d)}\\
		\alpha&\vdash \rho_{J,k} \tag*{\text{(by definition of $\alpha$) } (2e)}\\
		\alpha&\vdash n_k \wedge \rho_{J,k} \tag*{\text{(by 2d, 2e) } (2f)}\\
		\alpha&\vdash \rho_{J,k} \rightarrow \bigvee_{l=0}^f \bigcirc^l \mathtt{K}_k (n_k \succ n_j) \tag*{\text{(by definition  of $\rho_{J,k}$)} (3)} \\
		\alpha&\vdash \bigvee_{l=0}^f \bigcirc^l \mathtt{K}_k (n_k \succ n_j) \tag*{\text{(by MP, 2e, 3)} (4)}\\
		\alpha&\vdash \bigvee_{l=0}^f \bigcirc^l \mathtt{K}_k (n_k \succ n_j)\rightarrow \Circle^f \mathtt{K}_k (n_k \succ n_j), \tag*{\text{(by definition  of AS6,4)} (5)} \\
		\alpha&\vdash \Circle^f \mathtt{K}_k (n_k \succ n_j) \tag*{\text{(by MP, 4, 5)} (6)} \\
		\alpha&\vdash n_k \wedge \rho_{J,k} \rightarrow \bigvee_{l=0}^f \bigcirc^l \rho_{S1,k,j} \tag*{\text{[ACF3]} (7)}\\
		\alpha&\vdash \bigvee_{l=0}^f \bigcirc^l \rho_{S1,k,j} \tag*{(by MP, 2f,7) (8a)}\\
		\alpha&\vdash \bigcirc^f \rho_{S1,k,j} \tag*{(by AS6) (8b)}\\
		\alpha&\vdash \bigcirc^f ((\mathtt{K}_k (n_k \succ n_{j}) \wedge \mathtt{K}_j (n_j \prec n_i) \wedge n_k\mathtt{M}\langle n_i, n_{j}\rangle)
		\rightarrow \bigvee_{l=0}^f \bigcirc^l \mathtt{K}_j (n_j \prec n_{k})) \tag*{\text{(by 8b)} (9a)}\\
		\alpha&\vdash \Circle^{f} ((\mathtt{K}_k (n_k \succ n_{j}) \wedge \mathtt{K}_j (n_j \prec n_i) \wedge n_k\mathtt{M}\langle n_i, n_{j}\rangle))
				\rightarrow \Circle^{f}(\bigvee_{l=0}^f \bigcirc^l \mathtt{K}_j (n_j \prec n_{k})) \tag*{\text{(by AT2, 9a)} (9b)}\\
		\alpha&\vdash \Circle^{f} \mathtt{K}_k (n_k \succ n_j) \tag*{(by AS6, 6) (10a)}\\
		\alpha&\vdash \Circle^{f} \mathtt{K}_k (n_j \prec n_i) \tag*{(by AS6, 2b) (10b)}\\
		\alpha&\vdash \Circle^{f} (n_k\mathtt{M}\langle n_i, n_{j}\rangle) \tag*{(by AS6, 2c) (10c)}\\
		\alpha&\vdash \Circle^{f} (\mathtt{K}_k (n_k \succ n_{j}) \wedge \mathtt{K}_j (n_j \prec n_i) \wedge n_k\mathtt{M}\langle n_i, n_{j}\rangle)
		\tag*{\text{(by LF4, 10a, 10b, 10c)} (11)}\\
		\alpha&\vdash \Circle^{f}(\bigvee_{l=0}^f \bigcirc^l \mathtt{K}_j (n_j \prec n_{k})) \tag*{(by MP, 9,11) (12)}\\
		\alpha&\vdash \Circle^{2f} \mathtt{K}_j (n_j \prec n_{k}) \tag*{\text{(by definition  of $\Circle$, AS6, 12)} (13)}\\
		\alpha&\vdash \bigvee_{l=f}^{2f} \bigcirc^l\bigvee_{j=0}^{m-1} \rho_{S2,i,j} \tag*{\text{(by $n_i \in \mathbf{N_a}$ and ACF2 or ACF4)} (14)}\\
		\alpha&\vdash \bigvee_{l=f}^{2f}  \bigcirc^l \rho_{S2,i,k} \tag*{\text{(by definition  of $\vee$,14)} (15a)}\\
		\alpha&\vdash \bigcirc^{2f} \rho_{S2,i,k} \tag*{\text{(by definition  AS6,15a)} (15b)}\\
		\alpha&\vdash \Circle^{2f} (\mathtt{K}_i (n_i \succ n_{j}) \wedge \mathtt{K}_j (n_j \prec n_{k}) \wedge n_k\mathtt{M}\langle n_i, n_{j}\rangle) \rightarrow \bigvee_{l=0}^f \bigcirc^l\mathtt{K}_i (n_i \succ n_k)) \tag*{\text{(by 15b)} (16a)}\\
		\alpha&\vdash \Circle^{2f} (\mathtt{K}_i(n_i \succ n_{j}) \wedge \mathtt{K}_j (n_j \prec n_{k}) \wedge n_k\mathtt{M}\langle n_i, n_{j}\rangle)) \rightarrow \Circle^{2f}(\bigvee_{l=0}^f \bigcirc^l \mathtt{K}_i (n_i \succ n_k)) \tag*{\text{(by AT2, 16a)} (16b)}\\
		\alpha&\vdash \Circle^{2f} \mathtt{K}_i (n_i \succ n_{j}) \tag*{(by AS6, 2a) (17a)}\\
		\alpha&\vdash \Circle^{2f} \mathtt{K}_j (n_j \prec n_{k}) \tag*{(by AS6, 13) (17b)}\\
		\alpha&\vdash \Circle^{2f} (n_k\mathtt{M}\langle n_i, n_{j}\rangle) \tag*{(by AS6, 2c) (17c)}\\
		\alpha&\vdash \Circle^{2f} (\mathtt{K}_i (n_i \succ n_{j}) \wedge \mathtt{K}_j (n_j \prec n_{k}) \wedge n_k\mathtt{M}\langle n_i, n_{j}\rangle) \tag*{\text{(by LF4, 17a, 17b, 17c)} (18)}\\
		\alpha&\vdash \Circle^{2f}(\bigvee_{l=0}^f \bigcirc^l\mathtt{K}_i (n_i \succ n_k)) \tag*{(by MP, 16b,18) (19)} \\
		\alpha&\vdash \Circle^{3f}\mathtt{K}_i (n_i \succ n_k),  \tag*{\text{(by definition  of $\Circle $, AS6,19)} (20)}\\
		\alpha&\vdash \bigvee_{l=3f}^{4f}  \bigcirc^l \bigvee_{j=0}^{m-1} \rho_{S1,i,j} \tag*{\text{(by $n_i \in \mathbf{N_a}$ and ACF1 or ACF3)} (21)}\\
		\alpha&\vdash \bigvee_{l=3f}^{4f} \bigcirc^l \rho_{S1,i,k} \tag*{\text{(by definition  of $\vee$,21)} (22a)} \\
		\alpha&\vdash  \bigcirc^{4f}\rho_{S1,i,k} \tag*{\text{(by definition  AS6, 22a)} (22b)} \\
		\alpha&\vdash \Circle^{4f} (\mathtt{K}_i(n_i \succ n_{k}) \wedge \mathtt{K}_k(n_k \prec u) \wedge n_k\mathtt{M}\langle n_i, n_{j}\rangle) \rightarrow \bigvee_{l=0}^f \bigcirc^l\mathtt{K}_k(n_k \prec n_{i}))  \tag*{\text{(by  22b)} (23a)}\\
		\alpha&\vdash \Circle^{4f} (\mathtt{K}_i (n_i \succ n_{k}) \wedge \mathtt{K}_k (n_k \prec u) \wedge n_k\mathtt{M}\langle n_i, n_{j}\rangle)) \rightarrow \Circle^{4f}(\bigvee_{l=0}^f \bigcirc^l\mathtt{K}_k(n_k \prec n_{i})) \tag*{\text{(by definition  of AT2, 23a)} (23b)}\\
		\alpha&\vdash \Circle^{4f} \mathtt{K}_i(n_i \succ n_{k}) \tag*{(by AS6, 29) (24a)}\\
		\alpha&\vdash \Circle^{4f} \mathtt{K}_k (n_k \prec u) \tag*{(by AS6, 2e) (24b)}\\
		\alpha&\vdash \Circle^{4f} (n_k\mathtt{M}\langle n_i, n_{j}\rangle) \tag*{(by AS6, 2c) (24c)}\\
		\alpha&\vdash \Circle^{4f} (\mathtt{K}_i (n_i \succ n_{k}) \wedge \mathtt{K}_k(n_k \prec u) \wedge n_k\mathtt{M}\langle n_i, n_{j}\rangle) \tag*{\text{(by LF4, 24a, 24b, 24c)} (25)}\\
		\alpha&\vdash \Circle^{4f}(\bigvee_{l=0}^f \bigcirc^l \mathtt{K}_k(n_k \prec n_{i})) \tag*{(by MP, 23b,25) (26)} \\
		\alpha&\vdash \Circle^{5f}\mathtt{K}_k(n_k \prec n_{i}) \tag*{\text{(by definition  of $\Circle$, AS6,26)} (27)}\\
		\alpha&\vdash \Circle^{5f} \mathtt{K}_k(n_k \succ n_j)  \tag*{\text{(by AS6, 6)} (28)}\\
		\alpha&\vdash \Circle^{5f} \mathtt{K}_j (n_j \prec n_{k}) \tag*{\text{(by AS6, 13)} (29)}\\
		\alpha&\vdash \Circle^{5f} \mathtt{K}_i(n_i \succ n_k) \tag*{\text{(by AS6, 20)} (30)} \\
		\alpha&\vdash \Circle^{5f} \mathtt{K}_k(n_k \prec n_{i}) \tag*{\text{(by AS6, 27)} (31)} \\
		\alpha&\vdash \Circle^{5f}(\mathtt{K}_k(n_k \succ n_j) \wedge \mathtt{K}_j (n_j \prec n_{k}) \wedge \mathtt{K}_i (n_i \succ n_k) \wedge \mathtt{K}_k (n_k \prec n_{i}))\tag*{\text{(by LF4, 28, 29, 30, 31)} (32)} \\
		\alpha&\vdash \Circle^{5f} (n_i \Cap n_j) \tag*{\text{(by definition of $\Cap$)} (33)}
		\end{align}
	\end{proof}

\section*{Acknowledgment}

The work presented here was supported by Serbian Ministry of
Education, Science and Technology Development (the projects ON174026 and III44006),
through Matematički institut SANU, and Ministarstvo znanosti,
obrazovanja i športa republike Hrvatske. This work was supported in part by the Slovenian Research Agency within the research program Algorithms and Optimization Methods in Telecommunications.

\end{document}